\DeclareMathOperator{\poly}{poly}
\DeclareMathOperator{\cc}{cc}
\DeclareMathOperator{\secur}{sec}
\newtheorem{theorem}{Theorem}
\newtheorem{claim}[theorem]{Claim}
\newtheorem{conjecture}[theorem]{Conjecture}
\newtheorem{corollary}[theorem]{Corollary}
\newtheorem{lemma}[theorem]{Lemma}
\begin{document}

\title{Multitask Efficiencies in the Decision Tree Model}
\author{Andrew Drucker\thanks{Email: adrucker@mit.edu. \ This work was supported by a Kunzel Fellowship while at UC San Diego (2006-7), by Scott Aaronson while the author was a visiting student at MIT (2007-8), and by an Akamai Presidential Graduate Fellowship at MIT (2008-present). }\\MIT}
\date{}
\maketitle

\begin{abstract}
In Direct Sum problems \cite{KRW}, one tries to show that for a given computational model, the complexity of computing a collection $F = \{f_1(x_1), \ldots f_l(x_l)\}$ of finite functions on independent inputs is approximately the sum of their individual complexities.  In this paper, by contrast, we study the diversity of ways in which the joint computational complexity can behave when all the $f_i$ are evaluated on a \textit{common} input.  We focus on the deterministic decision tree model, with depth as the complexity measure; in this model we prove a result to the effect that the `obvious' constraints on joint computational complexity are essentially the only ones.

The proof uses an intriguing new type of cryptographic data structure called a `mystery bin' which we construct using a small polynomial separation between deterministic and unambiguous query complexity shown by Savick\'{y}.  We also pose a variant of the Direct Sum Conjecture of \cite{KRW} which, if proved for a single family of functions, could yield an analogous result for models such as the communication model.

\end{abstract}

\section{Introduction\label{INTRO}}

A famous `textbook' result in algorithms \cite{Poh}, \cite{CLRS00} is that given a list of $n$ integers, it is possible to locate both the maximal and the minimal element using $\lceil \frac{3n}{2} \rceil - 2$ comparisons; this is an improvement over the na\"{\i}ve strategy of computing each separately, which would take $2n - 2$ comparisons.  In such a setting we say, informally, that a `multitask efficiency' exists between the MAX and MIN functions, because the tasks of computing them can be profitably combined.  We emphasize that, in contrast to `direct sum problems', where we wish to understand the complexity of computing several functions (usually the same function) on several disjoint input variable sets, here we are interested in evaluating multiple functions on a \textit{common input}.

While in relatively simple examples such as MAX/MIN the multitask efficiencies that exist can be well-understood, for even slightly richer examples the situation becomes more complex.  For instance, in the `set maxima' problem, one is given a list of $n$ integers and a family of subsets $S_i$ of $[n]$, and is asked to find the maximal element in each corresponding subset of the list.  A significant amount of research has focused on finding upper- and lower-bounds on the number of comparisons needed, as determined by the set-family structure; see, e.g., \cite{GKS}.

There are even more mysterious examples of multitask efficiencies in computation.  Given a linear map $T(x): \mathbb{F}_2^n \rightarrow \mathbb{F}_2^n$, one can ask about the complexity of circuits computing $T$, composed of $\mathbb{F}_2$-addition gates.  Clearly every \textit{individual} output coordinate can be computed by a bounded-fanin linear circuit of size $n$ and depth $\lceil\log n \rceil$, but to compute all $n$ outputs simultaneously may require greater resources.  However, it has been open for over 30 years to provide an explicit family of maps which provably cannot be computed by linear-size, logarithmic-depth linear circuits, even though such maps are known to exist in abundance \cite{Val77}.

\subsection{Our Results}

Inspired by these examples, in this paper we propose and begin a systematic study of computational models from the point of view of the multitask efficiencies they exhibit.  Formally we approach this in the following way.  We fix a computational model $M$ capable of producing output over any finite alphabet, and a notion of cost for that model (worst-case number of comparisons, decision tree depth, etc.).  Given a collection $F = \{f_1(x), f_2(x), \ldots f_l(x)\}$ of functions on a common input $x \in \{0, 1\}^n$, define the \textit{multitask cost function} $C_F(X): \{0, 1\}^l \rightarrow \mathbb{R}$ by letting $C_F(X)$ equal the minimum cost of any algorithm in $M$ that, on input $x \in \{0, 1\}^n$, outputs in some specified order the values $f_i(x)$, for every $i$ such that $X_i = 1$.  (We use capitalized variable names for vectors that index subsets of a function family $F$, to distinguish them from the lower-case vectors $x$ which will denote inputs to $F$.)

The question we are interested in is this: What kind of functions $C_F(X)$ can arise in this way, when we range over all choices of $F$?

There are some obvious constraints on $C_F$.  For many reasonable definitions of cost, $C_F$ will be \textbf{nonnegative and integer-valued} (at least for worst-case notions of cost, which we will always be considering).  As long as the functions in $F$ are non-constant (and we will assume this throughout), $C_F(X)$ will be 0 if and only if $X = \mathbf{0}$.  

We expect $C_F$ to be \textbf{monotone} (but not necessarily strictly monotone), since any algorithm computing a subset $S \subseteq F$ of functions can be trivially modified to compute any $S' \subset S$.  Finally, $C_F$ should be \textbf{subadditive}; that is, we should always have $C_F(X \vee Y) \leq C_F(X) + C_F(Y)$.  This is because an algorithm can always solve two subcollections of functions separately and then combine the results in its output. 

Are there any other constraints?  We now illustrate by example that, for at least some models of computation, there are functions $C(X)$ obeying the constraints above, which do not correspond to $C_{F}(X)$ for any choice of collection $F$.  We consider the deterministic  decision tree model, with depth as the complexity measure.

For $X \in \{0, 1\}^3$, let $||X||$ be the Hamming weight of $X$, and define

\begin{equation*}
C^*(X)= 
\begin{cases} 0 & \text{if $||X|| = 0$,}
\\
1 & \text{if $||X|| \in \{1, 2\}$,}
\\
2 &\text{if $||X|| = 3$.}
\end{cases}
\end{equation*}
One can verify that $C^*(X)$ satisfies nonnegativity, monotonicity, and subadditivity. Now suppose for contradiction's sake that some family $F = \{f_1(x), f_2(x), f_3(x)\}$ satisfies $C_{F}(X) = C^*(X)$ for all $X$. This means that any two functions in $F$ can be computed with one query to $x$, while it requires 2 queries to compute all three.

Since $C^*(1, 1, 0) = 1$, $f_1$ and $f_2$ must depend only on a single shared input bit $x_i$.  Similarly $C^*(1, 0, 1) = 1$ implies that $f_1, f_3$ each depend on a single shared input bit $x_j$, so $i = j$.  But then a single query to $x_i$ determines all three functions, so that $C_{F}(1, 1, 1) = 1 \neq C^*(1, 1, 1)$.  This contradicts our assumption.

The example of $C^*$ suggests that other significant constraints might exist on multitask cost functions for decision-tree complexity.  However, we will show that there is a strong sense in which this is false.  In Section \ref{ecfs} we formally define \textit{economic cost functions} as functions obeying nonnegativity (strict except at $\mathbf{0}$), monotonicity, and subadditivity; the rest of the paper is then devoted to proving the following result:

\begin{theorem}\label{main}  Given any collection 
\[F = \{f_1(x), f_2(x), \ldots f_l(x)\}\]
of nonconstant Boolean functions, $C_F(X)$ (defined relative to the adaptive query model) is an economic cost function.  

Furthermore, given any economic cost function $C(X): \{0, 1\}^l \rightarrow \mathbb{Z}$, and an $\epsilon > 0$, there exist integers $n$, $T > 0$, and a collection $F = \{f_1(x), \ldots f_l(x)\}$ of (total) Boolean functions on a common $n$-bit input $x$, such that, for all $X$, 
\[(1 - \epsilon)T\cdot C(X) \leq C_F(X) \leq (1 + \epsilon)T\cdot C(X).\]

\end{theorem}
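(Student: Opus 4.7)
The first half (necessity) is a routine check of three decision-tree facts. Query counts are nonnegative integers, and any nonconstant $f_i$ requires at least one query to distinguish its values, so $C_F(X)>0$ whenever $X\neq\mathbf{0}$. For monotonicity, a tree computing $\{f_i:i\in Y\}$ also yields $\{f_i:i\in X\}$ for every $X\le Y$ (just discard the unwanted outputs). For subadditivity, run an optimal tree for $X$, and from the leaf reached run an optimal tree for $Y$, concatenating their outputs; the resulting tree has depth at most $C_F(X)+C_F(Y)$ and computes $X\vee Y$.

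For the converse, the plan is to construct, for each economic $C$, a family $F$ together with an integer scaling factor $T$ so that $C_F\approx T\cdot C$. Because we only need approximate equality, it is natural to work with a gadget-based encoding that tolerates a constant-factor overhead. The key tool, flagged in the abstract, is Savick\'{y}'s polynomial separation between deterministic and unambiguous query complexity: I would package his hard function into ``mystery bin'' objects whose deterministic query cost is high in general but collapses when a short unambiguous certificate is already known.

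The intended design is to associate to each relevant subset $X\subseteq[l]$ a mystery bin $B_X$ whose contents contribute to the outputs of the $f_i$ with $i\in X$, calibrated so that the cheapest way to satisfy a required subset $X$ of queries is to access exactly those bins whose combined cost equals $T\cdot C(X)$, up to a $(1+\epsilon)$-factor. The bins must be laid out over a shared Boolean input $x\in\{0,1\}^n$ so that: (a) the $f_i$'s are total Boolean functions of $x$; (b) there is an algorithm that, given $X$, accesses the bins in an order that unlocks unambiguous shortcuts and stays within $(1+\epsilon)T\cdot C(X)$; and (c) every alternative access pattern costs at least $(1-\epsilon)T\cdot C(X)$. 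Monotonicity and subadditivity of $C$ enter by guiding how bins overlap across subsets, ensuring that cheap shortcuts exist precisely at the costs prescribed by $C$.

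The upper bound will reduce to accounting once a bin layout is fixed: exhibit, for each $X$, the intended shortcut-respecting traversal and tally its cost. The main obstacle is the \emph{lower bound}: that \emph{no} adaptive decision tree can compute the $f_i$'s for $X$ in fewer than $(1-\epsilon)T\cdot C(X)$ queries. This calls for an adversary argument ensuring that queries useful for one bin cannot be cheaply repurposed to open another, so the algorithm is forced to pay the full deterministic cost of some collection of bins whose aggregate is at least $C(X)$ in any legal decomposition. Savick\'{y}'s quantitative gap enters here: an unambiguous shortcut opens a bin only for its intended usage pattern, and on other patterns certifies nothing, compelling the full deterministic cost. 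Choosing $T$ large relative to the gadget overhead then produces the $(1\pm\epsilon)$ tolerance stated in the theorem.
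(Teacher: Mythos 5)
Your first half is correct and essentially identical to the paper's Lemma~\ref{easydir}: nonnegativity from nonconstancy, monotonicity by discarding unwanted output coordinates, and subadditivity by grafting a tree for $Y$ onto each leaf of a tree for $X$.

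The second half, however, is a sketch of intent rather than a construction, and the parts you leave vague are exactly the parts that carry the proof. First, the paper interposes a combinatorial normal form (Lemma~\ref{hsuniv}): every economic cost function $C$ is realized as the weighted hitting-set cost function of a set system $\mathbf{A}=\{A_1,\ldots,A_l\}$ over a universe $U=\{b_X : X\in\{0,1\}^l\}$ with $w(b_X)=C(X)$ and $A_i=\{b_X:X_i=1\}$; this is precisely where monotonicity and subadditivity of $C$ get used, and it turns the target into something with a clean upper/lower-bound structure. You gesture at ``a bin per relevant subset $X$,'' which happens to match that universe, but you never say what $f_i$ actually \emph{is} as a Boolean function of the bins, and that definition is the crux. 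The paper sets $f_i(x)=1$ iff all bins $y_u$ with $u\in A_i$ hold a \emph{common key}. That specific form is what makes both directions tight: for the upper bound, open the bins of a minimum-weight hitting set at full cost, then use the cheap ``is key $k$ in this bin?'' query (property (ii) of a mystery bin) to verify the discovered key in the remaining bins of each $A_i$; for the lower bound, if the algorithm's queries to the bins do not contain a hitting set's worth of full-cost reads, some $A_i$ has every bin under-queried, and then it is simultaneously consistent (by the security property (iii) and a union bound over the large keyspace) that all those bins share a key and that they do not, so $f_i$ is undetermined. Your phrase ``calibrated so that the cheapest way costs $T\cdot C(X)$'' presupposes this but does not supply it. Finally, you do not engage with how mystery bins are actually built: the paper needs the TUSP derived from Savick\'{y}'s uDNF separation, a two-table encoding (witness-to-key and key-to-witness), and a probabilistically constructed weak exposure-resilient function to force the sharp transition in property (iii); without these the ``mystery bin'' abstraction has no instantiation.
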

That is, there exist multitask cost functions $C_F(X)$ to approximate any economic cost function, if that economic cost function is allowed to be `scaled up' by a multiplicative factor and we allow a multiplicative error of $\epsilon > 0$.  Theorem \ref{main} would remain true if we allowed economic cost functions to take non-integral values, since (up to a scaling factor) such functions can be arbitrarily well-approximated by integral economic cost functions.

We summarize Theorem \ref{main} by saying that the adaptive query model is \textit{universal for economic cost functions}.  For any model $M$ of computation with an associated notion of cost, we say that $M$ is universal for economic cost functions if the analogue of Theorem \ref{main} is true with multitask cost functions from $M$ replacing those of the adaptive query model.

As a consequence of Theorem \ref{main}, we obtain a universality result for any deterministic, adaptive model into which we can `embed the query model'.  For example of what we mean, let us consider the comparison model of computation over lists of integers in which a basic step is a comparison of two list elements (as used in the MAX/MIN result mentioned earlier). 

Let $F = \{f_1(x), \ldots f_l(x)\}$ be any collection of Boolean functions with domain $\{0, 1\}^n$.  Based on $F$, we define a collection $G = \{g_1(a), \ldots g_l(a)\}$ of Boolean-valued functions $g_j(a)$ taking as common input a list of $2n$ integers $a = (a_1, \ldots a_{2n})$.  First, let $b_i = b_i(a)$ be an indicator variable for the event $[a_{2i - 1} < a_{2i}]$.  Then define
\[g_j(a) = f_j(b_1, \ldots b_n).\]
The values $b_i$ are each computable by a single comparison, and each pair $b_i, b_{i'}$ are functions of disjoint variable-sets, so we see that the cost of computing any subcollection of $G$ on a common input is exactly the cost (in the Boolean adaptive-query model) of computing the corresponding subcollection of $F$.  

Since the query model thus `embeds' into the comparison model (and since cost functions in the comparison model can be easily seen to be economic cost functions), in light of Theorem \ref{main} we conclude:

\begin{corollary}\label{comparison} The comparison model is universal for economic cost functions.  \qed

\end{corollary}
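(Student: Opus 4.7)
The plan is to verify both halves of the definition of universality for the comparison model. First I would check that comparison-model multitask cost functions are economic: monotonicity follows because an algorithm computing the subcollection indexed by $X$ can discard outputs to serve any $Y$ with $Y_i \le X_i$ for all $i$; subadditivity follows by sequentially running algorithms for $X$ and $Y$ and concatenating outputs; nonnegativity and integrality are immediate; and nonconstancy of the $g_j$ gives strict positivity away from $\mathbf{0}$.

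For the approximation direction I would compose Theorem \ref{main} with the embedding $G$ described just before the corollary. Given an economic cost function $C$ and an $\epsilon > 0$, I apply Theorem \ref{main} to produce a Boolean family $F = \{f_1, \ldots, f_l\}$ satisfying $(1-\epsilon)T\cdot C(X) \le C_F(X) \le (1+\epsilon)T\cdot C(X)$, define $G = \{g_1, \ldots, g_l\}$ by $g_j(a) = f_j(b_1(a), \ldots, b_n(a))$ with $b_i(a) = [a_{2i-1} < a_{2i}]$, and then verify $C_G(X) = C_F(X)$ for every $X$; the corollary follows. The upper bound $C_G(X) \le C_F(X)$ is immediate: an optimal decision tree for the subcollection of $F$ indexed by $X$ is simulated in the comparison model by replacing each query to $b_i$ with a single comparison of $a_{2i-1}$ versus $a_{2i}$.

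The lower bound $C_F(X) \le C_G(X)$ is the more delicate step, because a comparison algorithm may compare list entries across distinct pairs, which has no direct analogue in the query model for $F$. To handle this I would restrict attention to canonical inputs: for each $b \in \{0,1\}^n$, define $a^b$ by forcing the pair $(a_{2i-1}, a_{2i})$ to lie entirely inside the disjoint interval $\{2i, 2i+1\}$, with internal order determined by $b_i$. On the canonical inputs every cross-pair comparison returns a fixed answer depending only on the pair indices, while every in-pair comparison reveals exactly $b_i$. A comparison algorithm $\mathcal{A}$ of depth $d$ can then be converted into a decision tree $\mathcal{T}$ over $b$ of depth at most $d$ as follows: $\mathcal{T}$ answers any cross-pair comparison with its deterministic canonical value without querying $b$, and answers any in-pair comparison by querying $b_i$ (reusing a cached value if $b_i$ has already been queried on the current root-to-leaf path). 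Since $\mathcal{A}$ is in particular correct on canonical inputs, $\mathcal{T}$'s outputs agree with $f_j(b)$ for every $j$ with $X_j = 1$.

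The main technical point will be making the simulation bookkeeping precise -- in particular, verifying that caching of previously-queried $b_i$'s is handled cleanly and that the resulting decision tree's depth is really bounded by the comparison depth along every branch -- but no ideas beyond the pair-disjoint embedding and the canonical-input restriction should be needed.
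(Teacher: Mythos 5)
Your proposal is correct and takes essentially the same approach as the paper: the paper proves the corollary by the embedding $G$ built just before it, asserting that cost is preserved exactly because the $b_i$'s are on disjoint pairs, and you have simply supplied the missing details (in particular, the canonical-input adversary restriction that makes cross-pair comparisons useless, which is implicitly what the paper's "disjoint variable-sets" remark is invoking).
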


Proving such a result in the communication model seems difficult, and would require a better understanding of the Direct Sum phenomenon.  We next state a `Query-Model Embedding Conjecture' that would suffice to prove that the communication model is universal for economic cost functions, along the lines of Corollary \ref{comparison}.

Let $n, k > 0$ be integers.  Given $f(x, y): \{0, 1\}^{2n} \rightarrow \{0, 1\}$, and a function $g(z): \{0, 1\}^{k} \rightarrow \mathbb{N}$, define a function $(g \circ f): \{0, 1\}^{2nk} \rightarrow \mathbb{N}$ by 
\[(g \circ f)(x_1, y_1, x_2, y_2, \ldots x_k, y_k) = g(f(x_1, y_1,), \ldots f(x_k, y_k)). \]
In the communication problem for $g \circ f$ we understand Alice to receive all $x$-inputs and Bob all $y$-inputs.  Let $\cc(h)$ denote the (adaptive, deterministic) communication complexity of computing the ($\mathbb{N}$-valued) function $h$, by a protocol in which Alice speaks first, and both parties learn the function value. As usual let $D(g)$ denote the decision tree complexity of computing $g$.

\begin{conjecture} For every $k\in \mathbb{N}$ and $\delta \in (0, 1)$, there exists $n> 0$ and a function $f: \{0, 1\}^{2n} \rightarrow \{0, 1\}$ (with $\cc(f) > 0$) such that for all $g: \{0, 1\}^{k} \rightarrow \mathbb{N}$, we have

\[\cc(g \circ f) \geq (1 - \delta)\cc(f) D(g).  \]

\end{conjecture}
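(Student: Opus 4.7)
The plan is to establish a \emph{communication-to-query simulation theorem} for a carefully chosen gadget $f$: given any deterministic Alice-first protocol $\Pi$ for $g \circ f$ of cost $C$, extract an adaptive decision tree for $g$ of depth at most $C / ((1-\delta)\cc(f))$. Since the conjecture lets $n$ depend on $k$ and $\delta$, we have the freedom to choose $f$ from the toolkit of functions used in analogous composition theorems -- Inner Product modulo $2$, Indexing/Address, a generic random function, or (most promisingly, given the present paper's technology) a gadget engineered from a deterministic-vs-unambiguous communication separation in analogy with the ``mystery bin'' used in Theorem \ref{main}.

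The key steps I would try in order are as follows. First, fix a gadget $f$ on $\{0,1\}^{2n}$ and prove a \emph{local cost lemma}: any segment of a transcript that determines $f(x_i,y_i)$ on block $i$ must contribute at least $(1-\delta)\cc(f)$ bits, even conditioned on transcript history restricted to other blocks. Second, design a charging scheme that assigns each communicated bit to a unique block $i \in [k]$ so that the local cost lemma still applies to the bits charged to block $i$; a transcript of length $C$ can then fully resolve at most $C/((1-\delta)\cc(f))$ blocks. Third, convert the charging scheme into an adaptive decision tree for $g$ whose node at each step queries the block $i$ that the protocol has just resolved, and whose leaves correspond to combinatorial rectangles on which $g\circ f$ is constant.

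The main obstacle is the charging step. A protocol is free to interleave information-gathering across blocks, to build correlations between different $f$-instances, and to terminate as soon as $g$'s value is forced -- without any of the individual $f$-values being fully determined. This is exactly the amortization phenomenon at the heart of direct-sum and composition questions in the Karchmer--Raz--Wigderson program, and defeating it for \emph{deterministic} communication is what keeps the conjecture open.

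A promising attack would be to engineer $f$ so that mid-protocol partial information about $f(x_i, y_i)$ is useless -- for instance by making $f$ ``unambiguous'' or ``all-or-nothing'' in a communication-complexity sense -- so that each block either contributes its full $\cc(f)$ cost or none at all. This parallels the role of Savick\'y's unambiguous-query separation in the main construction of the paper, and suggests that progress on the conjecture may come from importing the mystery-bin methodology to the communication setting rather than from purely information-theoretic direct-sum tools.
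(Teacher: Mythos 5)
This statement is a \emph{conjecture}, and the paper does not prove it; the author explicitly writes that they are ``unable to prove the conjecture.'' So there is no proof in the paper to compare yours against, and, to be clear, your submission is not a proof either---it is an attack plan. That is the correct stance to take here, and you correctly flag the real obstruction yourself: the charging/amortization step, i.e.\ ruling out protocols that spread partial information across the $k$ blocks and halt once $g$'s value is forced without ever fully resolving any single $f(x_i,y_i)$. This is precisely what keeps the conjecture open, and your ``local cost lemma plus charging scheme'' outline is the standard template for composition/simulation theorems; it is not a new idea, and you do not supply the missing ingredient that would make it go through for arbitrary $g$ and for \emph{deterministic} communication with a near-tight $(1-\delta)$ constant.

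A few points of comparison with what the paper actually says around the conjecture. The paper already establishes the matching upper bound $\cc(g\circ f)\le \cc(f)D(g)$ by the obvious ``simulate a decision tree for $g$, running the optimal protocol for $f$ per query'' argument, so only the lower bound is at issue. The paper also shows, via $f(x,y)=x\vee y$ and $g=\mathrm{OR}_k$, that the bound is genuinely sensitive to the choice of inner $f$---your plan should explicitly rule out such ``collapsing'' gadgets, and your ``all-or-nothing'' desideratum is aimed at exactly this. The paper's own tentative suggestion is that a \emph{random} $f$ on a sufficiently large input should serve; your alternative suggestion to import the unambiguity/mystery-bin methodology into the communication model is a reasonable and arguably more structured direction, consistent with the paper's remark that the conjecture is a strengthened Direct Sum property and appears related to the Enumeration/Elimination Conjectures of \cite{ABGKT}. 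In short: no gap to report in the sense of an error, because you claim no proof; but also nothing here that would close the conjecture, and you should present it as a research program rather than as even a partial proof.
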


We can show a nearly matching upper bound 
\[\cc(g\circ f) \leq \cc(f)D(g)\]
for all choices of $g, f$, by the following protocol idea: The players consider $\{b_i := f_i(x_i, y_i)\}_{i \leq k}$ as bits to be `queried', and simulate an optimal decision tree on these bits; whenever they want to determine some $b_j$, they execute the optimal communication protocol for $f$ on $(x_j, y_j)$.  This makes them both learn $f(x_j, y_j)$, so they both know which bit $b_i$ is to be `queried' next.

Note that the conjecture asserts a strengthened form of the Direct Sum property, for some particular family of functions $f$: by setting $g$ to be a function that outputs an encoding of its input, we see that computing $f(x, y)$ on $k$ independent input pairs requires nearly $k$ times as much communication as for one pair.  

Unable to prove the conjecture, we can at least note the following: the Conjecture really is sensitive to our choice of `inner' function $f$.  For example, let $f(x, y) = x \vee y$, and let $g$ be the OR function on $k$ bits.  Then the communication complexity of computing $ (g \circ f) = \bigvee_{i = 1}^k (x_i \vee y_i) = (\bigvee_{i = 1}^k x_i)  \vee (\bigvee_{i = 1}^k y_i)$ is $O(1)$, even though each $f(x_i, y_i)$ has nonzero communication complexity and the OR$_k$ function has decision tree complexity $k$.

We suspect that a \textit{random} function $f(x, y)$, on an input size sufficiently large compared to $k$ and $\frac{1}{\delta}$, should be a suitable inner function.

Our conjecture also appears somewhat related to the Enumeration and Elimination Conjectures of \cite{ABGKT} (so far unresolved).  These are another type of variant of the Direct Sum Conjecture of \cite{KRW}.  We are not, however, aware of any formal implication between these conjectures and the ours.

\subsection{Outline and Methods}

To prove Theorem \ref{main}, a key tool is the notion of hitting sets of weighted set systems.  Given a set family $\mathbf{A} = \{ A_1, \ldots A_l\}$ over a universe $U$ and a weight function $w: U \rightarrow \mathbb{R}$, the weight of a subset $B \subseteq U$ is defined as the sum of $B$'s members' weights.  $B$ is called a \textit{hitting set} for a subfamily $S \subseteq \mathbf{A}$ if $B$ intersects each $A_i \in S$.  The \textit{hitting-set cost function} $C_{\mathbf{A}}(X): \{0, 1\}^l  \rightarrow \mathbb{N}$ gives the minimum weight of any $B$ that is a hitting set for $S_X = \{A_i: X_i = 1\}$.

We use these notions to derive a useful representation lemma (Lemma \ref{hsuniv}): for any economic cost function $C(X)$ on $l$ bits, there exists a family $\mathbf{A} = \{ A_1, \ldots A_l\}$ over a weighted universe $(U, w)$, such that $C_{\mathbf{A}}(X) = C(X)$.  The (simple) proof of Lemma \ref{hsuniv} is given in Section \ref{setsys}.

As a concrete example to illustrate the expressive power of these hitting-set cost functions, we present a simple weighted set system whose hitting-set cost function is exactly the example function $C^*(X)$ presented earlier in the Introduction.  (This will not be the set system that would be produced by our general method.)  Let $\mathbf{A}$ be the family of all 2-element sets over the universe $U = \{u_1, u_2, u_3\}$ (so, $|\mathbf{A}| = 3$), and let $w(u_i) = 1$, for each $u_i \in U$.  Note that any one or two of the sets from $\mathbf{A}$ has a hitting set of size 1, but to hit all of $\mathbf{A}$ requires two elements.  Since each element has unit weight, $C_{\mathbf{A}}(X)$ is exactly $C^*(X)$.

Returning now to the discussion of our main strategy, it will suffice to solve the following problem: given a weighted set system $\mathbf{A} = \{A_1, A_2, \ldots A_l\}$, produce a collection $F = \{f_1, \ldots f_l\}$ of Boolean functions over some domain $\{0, 1\}^n$ such that $C_F(X)$ is approximately a multiple of $C_{\mathbf{A}}(X)$.

Here is a high-level sketch of our collection $F$.  For each $u \in U$, we create a block $y_u$ of input variables called the `bin' for $u$; $x$ is the disjoint union of these blocks.  $y_u$ represents, in a carefully defined way, the contents of a conceptual `bin' which contains at most one `key' $k$ from a large set $K$ called the `keyspace'.

The bin representations and a value $T > 0$ are chosen in such a way that the following (informal) conditions hold:

\begin{itemize}
\item[(i)] The contents of any bin $y_u$ can be determined in at most $w(u) T$ queries;
\item[(ii)] For any fixed $k \in K$ and any bin $y_u$, it can be determined with `very few' queries whether $k$ is in the bin (so that this step is `essentially for free' in comparison to the queries described in (i));
\item[(iii)] If the number of queries an algorithm makes to the bin $y_u$ is even `noticeably' less than $w(u) T$, the amount of information it gains about the bin contents is `tiny', that is, the data seen is consistent with almost any $k \in K$ occupying the bin.  (At least, this outcome is unavoidable when an appropriately chosen adversary strategy determines the answers to queries as they are made.)
\end{itemize}

We will formalize bins obeying the above properties in the notion of `mystery bin functions' in Section \ref{mbinsec}.

Returning to the sketch construction of our function collection, for $i \in \{1, 2, \ldots l\}$, define $f_i(x) = 1$ iff there exists some $k \in K$ that is contained in each of the `mystery bins' $y_u$ corresponding to elements $u \in A_i$.

To informally analyze this collection, fix any nonzero $X \in \{0, 1\}^l$, indexing a subcollection $S_X \subseteq \mathbf{A}$.

For an upper bound on $C_F(X)$, pick a minimal-weight hitting set $B$ for $S_X$, so $w(B) = C_\mathbf{A}(X)$.  In the first phase, for each $u \in B$, let our algorithm determine the bin contents of $y_u$.  By property (i) this phase uses at most $ w(B) T$ queries.

Next comes the second phase.  For every $A_i \in S_X$, there's a $u \in A_i \cap B$, whose bin contents we've determined; if the bin $y_u$ was empty we can conclude $f_i(x) = 0$.  If the bin contained the element $k \in K$ (remember that at most one key lies in each bin), query the bins of all other elements $u' \in A_i$ to see if $k$ is in all of them.  If so, $f_i(x) = 1$, otherwise $f_i(x) = 0$.  

Thus our algorithm succeeds in computing $\{f_i(x): X_i = 1\}$.  By property (ii) above, the query complexity of this second phase is `negligible', giving $C_F(X) \leq (1 + \epsilon)T\cdot C_\mathbf{A}(X)$ as needed.

For the lower bound, we pit any algorithm using fewer than $(1 - \epsilon )T\cdot C_\mathbf{A}(X)$ queries against an adversary strategy that runs the adversary strategies for each mystery bin in parallel.  Since $C_{\mathbf{A}}(X)$ is the minimal cost of any hitting set for $S_X$, at the end of this run of the algorithm there must exist some $A_i \in S_X$ such that for each $u \in A_i$, $y_u$ receives noticeably less than $ w(u) T$ queries.  Using property (iii) of mystery bins, we then argue that the algorithm fails to determine the value $f_i(x)$.  This will prove $C_F(X) \geq (1 - \epsilon )T\cdot C_\mathbf{A}(X)$.

The main technical challenge in implementing the above idea is to design the right representation of the bin contents of the blocks $y_u$ to guarantee the `mystery bin' properties.  To build mystery bin functions, we will exploit a small polynomial separation between decision tree depth and unambiguous certificate complexity, due to Savick\'{y} \cite{Sav02}.  We describe his result, and reformulate it for our purposes, in Section \ref{tusps}. 

How does Savick\'{y}'s result facilitate our construction of `mystery bins'?  Roughly speaking, the gap between deterministic and circuit complexity in his theorem yields the query-complexity gap between properties (i) and (ii) of mystery bins, while the key contribution of unambiguity is in allowing us to construct mystery bin functions in which the bin always contains at most one key.  In the algorithm described above to compute $\{f_i(x): X_i = 1\}$, this allows the query complexity of the second phase to remain negligible, yielding the upper bound we need on $C_F(X)$.

In the course of building mystery bin functions, another useful device called a `weak exposure-resilient function' is also introduced and used.  This object, an encoding method that looks uninformative when restricted to a small number of coordinates, is indeed a weak special case of the `exposure-resilient functions' studied in \cite{CDHKS}; however, the parameters we need are easily obtainable and so we provide a self-contained (probabilistic) construction and analysis.

\section{Definitions and Preliminary Results}

\subsection{Vectors and Economic Cost Functions} \label{ecfs}

Given two bitvectors $X = (X_1, \ldots X_l), Y = (Y_1, \ldots Y_l)$, we write $X \leq Y$ if $X_i \leq Y_i$, for all $i = 1, 2, \ldots n$.  We define the vector $Z = X \vee Y$ by the rule $Z_i = X_i \vee Y_i$.

Note that, in this paper, we use capital-letter variable names ($X, Y, Z$) to refer to vectors indexing `bundles of goods', and we use lower-case variable names to refer to other vectors, such as the inputs and outputs to functions whose decision-tree complexity we will analyze.

We say that a function $C(X): \{0, 1\}^l \rightarrow \mathbb{Z}$ is an \textit{economic cost function} if it satisfies the following conditions:
\begin{itemize}
\item[(1)] $C(X) \geq 0$, and $C(X) = 0 \Leftrightarrow X = \mathbf{0}$;
\item[(2)] For all $X, Y$, $X \leq Y$ implies $C(X) \leq C(Y)$;
\item[(3)] For all $X, Y$, $C(X \vee Y) \leq C(X) + C(Y)$.
\end{itemize}
We call such functions `economic cost functions' due to the following informal interpretation: consider the input $X \in \{0, 1\}^l$ to $C$ represent a certain subset of $l$ distinct `goods' that a company is capable of producing.  If $C(X)$ represents the cost to the company of producing one each of the goods indexed by the 1-entries of $X$, then intuitively, we expect $C$ to obey condition (1) because there's `no free lunch'.  Condition (2) supposes that, to produce one bundle of goods, one can always produce a larger bundle of goods and `throw away' the unwanted ones (and we assume free garbage disposal).  Condition (3) supposes that, to produce two (possibly overlapping) bundles $X, Y$ of goods, we can always separately produce the two bundles.  Equality may not always hold in condition (3), even for disjoint bundles of goods, due to possible `multitask efficiencies' arising in production.

We note in passing that the definition of economic cost functions is a special case of the more general notion of `outer measures' on lattices; see \cite{Bir}, Chapter 9.

\subsection{Decision Trees and Multitask Cost Functions}

We will consider decision trees taking Boolean input vectors but with outputs over a possibly non-Boolean alphabet.  A (deterministic, adaptive) \textit{decision tree} $T$ over the variables $x = x_1, \ldots x_n$ is a finite rooted binary tree whose internal nodes $u$ are each labeled with some variable index $i(u)$ and have designated `left' and `right' child nodes, and whose leaf (`output') nodes $l$ are each labeled with an element $v(l) \in B$, where $B$ is some finite alphabet.

A decision tree $T$ defines a function $f: \{0, 1\}^n \rightarrow B$ in the following way: given an input $x \in \{0, 1\}^n$, we begin at the root node.  Whenever we are at an internal node $u$, we look at the input variable $x_{i(u)}$.  If $x_{i(u)} = 0$, we move to the left child of $u$; if $x_{i(u)} = 1$, we move to the right child of $u$.  Eventually we arrive at an output node $l$, and we define $f(x) = v(l)$. 

We will be often consider decision trees $T$ whose output is a bitvector: $B = \{0, 1\}^l$ for some $l > 0$.  If the $i$th bit of $T$'s output is governed by the function $f_i(x)$, we say $T$ \textit{computes the collection} $\{f_1(x), f_2(x), \ldots f_s(x)\}$.

By the \textit{depth} of $T$, denoted $D(T)$, we mean the length of the longest path from the root in $T$, stepping exclusively from parent to child.  Given a collection of functions $S = \{f_1(x), f_2(x), \ldots f_l(x)\}$, we define the (deterministic, adaptive) \textit{query complexity} of $S$ as $D(S) =$ min $\{d:$ there exists a decision tree $T$ of depth $d$ computing the collection $S\}$.  If $S$ is a single function, $S = \{f\}$, we also write $D(f) = D(S)$. 

We next define, for any finite collection $F$ of functions, a function $C_F$ which summarizes the multitask efficiencies existing among the members of $F$ (relative to the decision-tree depth model of cost).  Given a collection $F$ of functions, $F = \{f_1(x), f_2(x), \ldots f_l(x)\}$ on a common input, we define the \textit{multitask cost function} $C_F(X): \{0, 1\}^l \rightarrow \mathbb{Z}$ \textit{associated with $F$} by $C_F(X) = D(S_X)$, where $f_i \in S_X \Leftrightarrow X_i = 1$.  We define $C_F(\mathbf{0}) = 0$.

Thus $C_F(X)$ gives the `cost' of certain `bundles of goods', where cost is interpreted as decision tree depth, and the different `bundles of goods' in question are the various subcollections of functions from $F$.  As promised by part of Theorem \ref{main}, we will show (Lemma \ref{easydir}) that for any $F$, $C_F(X)$ is always an economic cost function as defined in Section \ref{ecfs}.

\subsection{Search Problems and TUSPs} \label{tusps}

Although in this paper we are primarily interested in the query complexity of (collections of) decision problems, our proof techniques also involve \textit{search problems} (in the query model), defined next.  Our definitions and terminology will be slightly idiosyncratic, but for the most part could be altered slightly to match up with definitions from \cite{LNNW}.

Say that a string $w \in \{0, 1, *\}^n$ \textit{agrees with} $x \in \{0, 1\}^n$ if for all $i \in [n]$, $w_i \in \{0, 1\}$ implies $w_i = x_i$.  A search problem on domain $\{0, 1\}^n$ is specified by a subset $W \subseteq \{0, 1, *\}^n$ called the `witnesses'.  We say that a decision tree $T$ solves the search problem $W$ if (i) for every input $x$ that agrees with at least one $w \in W$, $T(x)$ outputs some $w' \in W$ agreeing with $x$ (if there are more than one such $w'$, we don't care which one), and (ii) if $x$ agrees with no $w \in W$, $T(x)$ outputs `no match'.

Given a search problem $W$, let $s(W)$ denote the maximum number of 0/1 entries in any $w \in W$.  Write $D(W)$ to denote the minimum depth of any decision tree solving $W$.

$W$ is called a \textit{total} search problem if all $x \in \{0, 1\}^n$ agree with at least one $w \in W$.  $W$ is called a \textit{unique} search problem if all $x$ agree with at most one $w \in W$.  In this paper we will deal with search problems $W$ that are both total and unique; we call such a $W$ a \textit{TUSP} for brevity.  A TUSP $W$ defines a (total, single-valued) function from $\{0, 1\}^n \rightarrow W$ mapping $x$ to the unique witness $w$ agreeing with $x$; we denote this function by $W(x)$.

For TUSPs $W$, as for other search problems, it is easy to see that $s(W) \leq D(W)$: for any decision tree $T$ solving $W$, the variables read by $T$ on an input $x$ must include all the $0/1$ entries in $w = W(x)$.  In fact, up to an at-most quadratic factor, this inequality is tight:

\begin{theorem}\label{quadratic} \cite{BI}, \cite{HH}, \cite{Tar89} For all unique search problems, $D(W) \leq s(W)^2$.  
\end{theorem}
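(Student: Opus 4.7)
The plan is to prove $D(W)\leq s(W)^2$ by induction on $s:=s(W)$, constructing an explicit decision tree. The base case $s=0$ is immediate, since then $W$ is either empty or consists of the single all-$*$ witness, so $D(W)=0$.

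For the inductive step, I would design the following recursive decision tree. At the root, pick any $w \in W$ (if $W$ is empty, output ``no match''). Query all 0/1 positions of $w$; this uses at most $s$ queries and determines a partial assignment $\alpha$ over those positions. If the input agrees with $w$ everywhere, output $w$ (correct by uniqueness). Otherwise, at each ``disagreement'' leaf of this top slab, recurse on the restricted search problem
\[W|_\alpha\ :=\ \{\, w''|_{\overline{\mathrm{supp}(\alpha)}}\ :\ w'' \in W,\ w''\text{ consistent with }\alpha \,\}\]
over the remaining input coordinates; uniqueness is inherited from $W$, and in the total case, totality is inherited too, since the true input extends $\alpha$ and has a witness in $W$.

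The heart of the argument is the claim $s(W|_\alpha) \leq s-1$, and this is exactly where uniqueness gets used in an essential way. To prove it, I would take any $w''\in W$ consistent with $\alpha$ and argue that $w''$ must share at least one specified position with $w$. Otherwise, $w$ and $w''$ have disjoint supports, so one can construct an input $y$ simultaneously agreeing with both $w$ and $w''$ by assigning their supports accordingly and filling in the rest arbitrarily; uniqueness of $W$ then forces $w''=w$, which is impossible because $\alpha$ disagrees with $w$ at some position while $w''$ is consistent with $\alpha$. So every surviving $w''$ loses at least one 0/1 entry in the restriction, giving $s(W|_\alpha)\leq s-1$.

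Granted this, the inductive hypothesis bounds the recursion depth by $(s-1)^2$, so the total depth is at most $s+(s-1)^2 = s^2 - s + 1 \leq s^2$, as desired. The main obstacle is precisely the inequality $s(W|_\alpha)\leq s-1$: without uniqueness, the ``eliminated'' witness $w$ need not share a coordinate with an arbitrary surviving witness, the induction breaks, and no analogous quadratic bound is available in general. The argument is also robust to the non-total case (one simply returns ``no match'' when a recursive subproblem is empty), so the same decision tree handles unique search problems whether or not they are total.
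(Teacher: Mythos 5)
Your proof is correct and is essentially the same argument as the paper's, merely phrased recursively (induction on $s(W)$, restricting to $W|_\alpha$) rather than iteratively (phases, ``active coordinates''). The key step --- uniqueness forces any surviving witness to share a specified coordinate with the one just queried, reducing the parameter by at least one --- is identical, and both yield the same $s+(s-1)+\cdots+1 \leq s^2$ bound.
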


\begin{proof} The proof is essentially identical to that of a related result, which states that decision-tree depth complexity is most the square of the `certificate complexity' for Boolean functions \cite{BI}, \cite{HH}, \cite{Tar89}.

Let $s = s(W)$.  We define a query algorithm as follows: on input $x$, proceed in phases.  At the beginning of phase $t$, let $W_t \subseteq W$ be the set of `live' witnesses, i.e. those that agree with the bits of $x$ seen so far.  Say that $i \in [n]$ is an `active' coordinate for $w \in W_t$ if $w_i \in \{0, 1\}$ and $x_i$ has not been queried.  In each phase $t$, the algorithm picks an arbitrary $w \in W_t$ and queries $x$ on each of the active coordinates $i$ for $w$.

Since $W$ is a unique search problem, every distinct $w, w' \in W_t$ disagree on at least one coordinate $i$ active for both $w$ and $w'$.  Thus, in each phase $t$ and for every $w \in V_t$, the number of active coordinates for $w$ decreases by at least one.  After at most $s$ phases, then, no live $w$ has any active coordinates; hence it either disagrees with $x$ on one of the bits already seen, or agrees with $x$ on each $i$ with $w_i \in \{0, 1\}$.  It follows that the decision tree for our algorithm solves $W$, while making at most $s + (s - 1) + \ldots + 1 \leq s^2$ queries. \end{proof}

In 2002 Petr Savick\'{y} \cite{Sav02} proved a theorem implying that, in general, $D(W)$ is not bounded by any constant multiple of $s(W)$ for TUSPs.  He uses different terminology and states a slightly different result than we need, so we will have to `unpack' his result a little.

A \textit{DNF formula $\psi$} is an OR of clauses, each of which consists of  the AND of one or more literals or negated literals.  Say that $\psi$ is an \textit{unambiguous DNF (uDNF)} if any input $x$ satisfies at most one of its clauses.  Savick\'{y} showed

\begin{theorem}\label{sav1}\cite{Sav02}  There exists a family of functions 
\[\{G_i: \{0, 1\}^{4^i} \rightarrow \{0, 1\} \}_{i \in \mathbb{N}} \quad{} \text{ such that}\]
 \begin{itemize}
\item[(i)] $G_i$ and $\overline{G_i}$ each have uDNF representations in which each clause has size at most $s_i = 3^i$;
\item[(ii)] $D(G_i) \geq \frac{4^i + 2}{3} = \Omega (s_i^{\gamma})$, where $\gamma = \log_3 (4) > 1$.
\end{itemize}
\end{theorem}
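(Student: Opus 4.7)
The plan is to construct $G_i$ by iterated self-composition from a 4-bit base function. I would first identify a base $G_1 : \{0,1\}^4 \to \{0,1\}$ for which both $G_1$ and $\overline{G_1}$ admit unambiguous DNF representations with clauses of size at most $3$ and whose query-adversary strategy behaves well under composition. For $i \geq 1$, define $G_{i+1}$ on $4^{i+1}$ bits by partitioning the input into four disjoint blocks $x^{(1)}, \ldots, x^{(4)}$ of size $4^i$ and setting $G_{i+1}(x^{(1)},\ldots,x^{(4)}) = G_1\bigl(G_i(x^{(1)}), G_i(x^{(2)}), G_i(x^{(3)}), G_i(x^{(4)})\bigr).$

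For part (i), I would show the uDNF clause-size bound by induction. Suppose $G_i$ and $\overline{G_i}$ have uDNFs $\psi_i^+, \psi_i^-$ with clauses of size at most $3^i$. Take any clause $C$ of a fixed uDNF of $G_1$, which is a conjunction of at most $3$ signed literals on the virtual bits $b_j = G_i(x^{(j)})$. Replace each positive literal by $\psi_i^+$ applied to the appropriate block, each negative literal by $\psi_i^-$, and expand by distributivity. Every clause in the resulting formula is an AND of at most three inner clauses, each of size at most $3^i$, hence has size at most $3^{i+1}$. Unambiguity is preserved since any fixed input $x$ satisfies exactly one outer clause (by unambiguity of the outer uDNF applied to the determined values $b_j$) and, within that clause, exactly one inner clause per factor (by unambiguity of $\psi_i^+$ or $\psi_i^-$), so exactly one expanded clause fires. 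Starting from a uDNF for $\overline{G_1}$ gives one for $\overline{G_{i+1}}$ by the same recipe.

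For part (ii), I would prove the recurrence $D(G_{i+1}) \geq 4 D(G_i) - 2$ by an adversary argument; combined with the base $D(G_1) \geq 2$ this solves to $D(G_i) \geq (4^i + 2)/3 = \Omega(s_i^{\log_3 4})$ by induction. The adversary for $G_{i+1}$ runs an independent copy of the $G_i$ adversary on each block $x^{(j)}$, refusing to commit to $b_j$ until that block has been probed to nearly its full cost $D(G_i)$; in parallel a top-level $G_1$ adversary manages the virtual bits $b_1, \ldots, b_4$ so as to keep $G_1(b_1,\ldots,b_4)$ ambiguous as long as possible. The core accounting step is that the algorithm cannot commit to an output for $G_{i+1}(x)$ until the induced transcript on each block is long enough to pin down $b_j$ against the block's $G_i$ adversary, charging roughly $D(G_i)$ queries per block.

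The hard part is choosing the base $G_1$ so that its top-level adversary really does force all four sub-computations to be carried out essentially in full. A careless 4-bit function with $D(G_1) = 2$ would let the algorithm decide $G_{i+1}$ after fully resolving only two blocks, yielding the much weaker recurrence $D(G_{i+1}) \geq 2 D(G_i)$ and losing the polynomial separation. Savick\'{y}'s contribution is an explicit $G_1$ (equivalently, a TUSP on $4$ bits of witness size at most $3$) whose structure, combined with a carefully chosen adversary, propagates a lower bound of the required strength under self-composition; once this base case and its adversary are in hand, the remaining inductive accounting of queries across the four blocks is routine.
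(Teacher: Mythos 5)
Your plan for part (i) --- iterated self-composition and the distributivity argument for propagating uDNF clause size --- matches the structure of Savick\'{y}'s construction and is sound once a suitable base $G_1$ is fixed. Part (ii) is where the gap lies. The paper does not re-derive this bound from scratch: it observes that Savick\'{y} \cite{Sav02} lower-bounds the number of \emph{leaves} of any decision tree computing $G_i$, and the stated depth bound $D(G_i)\ge(4^i+2)/3$ is then extracted via the elementary fact that a tree with $k$ leaves has depth at least $\lceil\log k\rceil$. That is a fundamentally different route from the direct depth recurrence $D(G_{i+1})\ge 4D(G_i)-2$ you propose. Your recurrence is reverse-engineered to hit the target bound, and the central claim --- that an adversary for $G_{i+1}$ can force all four sub-computations to run to near-completion --- is left as an assertion; this is exactly where the hard content would have to live.

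There is also a structural tension you do not confront. Since both $G_1$ and $\overline{G_1}$ have uDNFs with clauses of size at most $3$, $G_1$ has certificate complexity at most $3$, so on any input a size-$3$ certificate on the virtual bits $b_1,\ldots,b_4$ suffices; the algorithm therefore only needs to pin down three of the four sub-values exactly. A direct block-charging adversary of the kind you sketch would then naturally yield a factor of at most $3$ per level, and $3^i$ eventually falls below $(4^i+2)/3$, so the approach as described cannot reach the stated bound. (Conversely, a base with $D(G_1)=4$ forcing all four blocks to completion would, under a clean product-style depth argument, give $4^i$ --- stronger than and of a different shape from what is claimed, which is further evidence that the stated inequality is not obtained this way.) Savick\'{y}'s leaf-counting argument is precisely what sidesteps this difficulty; your proposal neither reproduces it nor supplies a working substitute for the depth lower bound.
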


Theorem \ref{sav1} is very close, but not identical, to the combination of Theorems 3.1 and 3.6 from \cite{Sav02}.  That paper was concerned with the complexity measure $p(f)$ defined as the minimal number of clauses in any uDNF representation of $f$, whereas we are concerned with minimizing the maximum size of any clause as in Theorem  \ref{sav1}; also, Savick\'{y} lower-bounds the number of leaves of any decision tree for $f$ rather than its depth.  However, the particular function family \cite{Sav02} gives is seen by inspection (and noted by the author) to satisfy condition (i), while condition (ii) follows from Savick\'{y}'s lower bound on number of leaves in any decision tree computing $G_i$, after noting that a decision tree with $k$ leaves has depth at least $\lceil \log (k) \rceil$. this yields Theorem  \ref{sav1}.

We remark that it to prove our main theorem, we don't really need the full strength of Theorem \ref{sav1}.  Specifically, it would be enough that just \textit{one} of $G_i$ or $\overline{G_i}$ had uDNF representations with short clauses relative to the query complexity (or even short-clause DNF representations with a bounded number of satisfied clauses per input).  However, using the full statement of Theorem \ref{sav1} makes our proof slightly simpler.

We can derive from Theorem  \ref{sav1} the following form of Savick\'{y}'s result, which will be more convenient for us:

\begin{theorem}\label{sav2}  There exists a family of TUSPs

$\{W_N \subset \{0, 1, *\}^{m(N)}  \}_{N = 1} ^{\infty}$ on $m(N) = O(\poly(N))$ input bits, and a constant $\alpha > 0$, such that $D(W_N) \geq s(W_N)^{1+ \alpha}$, while  $s(W_N)\geq N$.
\end{theorem}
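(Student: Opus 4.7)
The plan is to convert the Boolean function family $\{G_i\}$ supplied by Theorem \ref{sav1} directly into a family of TUSPs, by exploiting the unambiguous DNF representations of both $G_i$ and $\overline{G_i}$. Each clause of a uDNF naturally encodes a partial assignment $w \in \{0,1,*\}^{4^i}$ (set $w_j = 1$ where $x_j$ appears positively in the clause, $w_j = 0$ where it appears negatively, and $w_j = *$ otherwise). I would let $W_i$ consist of the partial assignments coming from all clauses of the uDNF of $G_i$ together with those of the uDNF of $\overline{G_i}$.

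The key observation is that $W_i$ is then automatically a TUSP. Since $G_i$ is total Boolean, every input $x$ satisfies exactly one of the two uDNFs; within whichever uDNF is satisfied, unambiguity guarantees exactly one satisfied clause. So $x$ agrees with precisely one $w \in W_i$. Property (i) of Theorem \ref{sav1} immediately gives $s(W_i) \leq 3^i$. For the lower bound on $D(W_i)$, I would observe that any decision tree solving $W_i$ also computes $G_i$ with no extra queries: each leaf label records which uDNF produced its witness, hence the value $G_i(x)$. Therefore $D(W_i) \geq D(G_i) \geq (4^i + 2)/3$, and
\[
\frac{D(W_i)}{s(W_i)^{1+\alpha}} \;\geq\; \frac{(4^i + 2)/3}{3^{i(1+\alpha)}} \;=\; \Omega\!\left(\left(\tfrac{4}{3^{1+\alpha}}\right)^{i}\right),
\]
which tends to infinity for any fixed $\alpha < \log_3(4/3)$; I would simply fix $\alpha := \tfrac{1}{2}\log_3(4/3) > 0$.

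To reindex the family by $N$ and verify the polynomial input-size bound, I would invoke Theorem \ref{quadratic}: since $W_i$ is unique, $s(W_i)^2 \geq D(W_i) \geq (4^i + 2)/3$, so $s(W_i) = \Omega(2^i)$. Given $N$, pick the smallest $i = i(N)$ with $s(W_i) \geq N$; then $i(N) = O(\log N)$ and the input length is $m(N) = 4^{i(N)} = O(N^2) = \poly(N)$. The only point requiring care, I expect, is this last piece of bookkeeping: one must simultaneously control the upper bound $s(W_i) \leq 3^i$ (which drives the separation exponent) and the lower bound $s(W_i) = \Omega(2^i)$ (which keeps $m(N)$ polynomial), and confirm the two are consistent with a single choice of $\alpha$ and map $N \mapsto i(N)$. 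The construction itself is essentially immediate once one recognizes that a pair of uDNFs for $G$ and $\overline{G}$ \emph{is} precisely the combinatorial data of a TUSP.
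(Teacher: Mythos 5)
Your proposal is correct and follows essentially the same route as the paper: convert the clauses of the two uDNFs into witnesses, observe that unambiguity plus totality yields a TUSP, bound $s(W_i)\leq 3^i$ from Theorem \ref{sav1}(i), bound $D(W_i)\geq D(G_i)$ since a solver for $W_i$ computes $G_i$, and then use Theorem \ref{quadratic} to get $s(W_i)=\Omega(2^i)$ for the reindexing and the polynomial input-size bound. The only cosmetic difference is that the paper takes $W_N := V_{\lceil\log N\rceil+1}$ directly rather than choosing the smallest suitable $i(N)$, and leaves $\alpha$ as an unspecified positive constant rather than pinning it to $\tfrac{1}{2}\log_3(4/3)$.
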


\begin{proof} For any $i > 0$, given uDNF representations $F_1, F_2$ of $G_i$ and $\overline{G_i}$ respectively satisfying condition (i) of Theorem \ref{sav1}, we define a search problem $V_i$:  For every clause $c$ in one of the $F_i$'s, define a witness $w_c \in V_i$ that has 0/1 entries exactly on the variables contained in $c$, with these variables set in the unique way satisfying $c$ (remember $c$ is a conjunction).  From the facts that $F_1, F_2$ are each uDNFs and that every input $x$ satisfies exactly one of them, we conclude that $V_i$ is a TUSP.  

By condition (i) of Theorem \ref{sav1}, $s(V_i) \leq 3^i$.   On the other hand, since any decision tree for $V_i$ immediately yields a decision tree of the same depth for $G_i$, we have
\[D(V_i) \geq  \frac{4^i + 2}{3} > \frac{(3^i)^{\log_3(4)}}{3},\]  
which for large enough $i$ is greater than $s(V_i)^{1 + \alpha}$ for an appropriate constant $\alpha > 0$.  Also, by Theorems \ref{quadratic} and \ref{sav1}, 
\[s(V_i) \geq \sqrt{D(V_i)} >  \frac{2^i}{\sqrt{3}} .\] 
Now we simply set $W_N = V_{\lceil \log (N) \rceil + 1}$.  We verify that $m(N) = 4^{\lceil \log (N) \rceil + 1} = O(\poly(N))$.  \end{proof}

In order to make effective use of the decision-tree depth lower bound contained in Theorem \ref{sav2}, we will need the following folklore result, showing the optimality of the `adversary method' in decision tree complexity:

\begin{claim}\label{advopt} Let $B$ be a finite set.  Suppose $f(x): \{0, 1\}^n \rightarrow B$ satisfies $D(h) \geq t > 0$; then there exists an adversary strategy for determining the bits of $x$ as they are queried (depending only on the sequence of queries made so far), such that for any query strategy making $(t - 1)$ queries to $x$, the bits of $x$ fixed in the process do not uniquely determine the value of $f(x)$.

\end{claim}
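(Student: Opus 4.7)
The plan is to construct the adversary strategy by maintaining, after each answered query, a set $S \subseteq \{0,1\}^n$ of inputs that are still consistent with all the answers given so far, and controlling a natural potential function: the minimum depth of any decision tree that correctly computes $f$ on all inputs in $S$. Write $D(f, S)$ for this quantity. Initially $S = \{0,1\}^n$ and $D(f, S) = D(f) \geq t$. The adversary operates as follows. When the algorithm queries a bit $x_i$, split $S = S_0 \cup S_1$ according to the value of $x_i$, and answer with whichever $b \in \{0,1\}$ maximizes $D(f, S_b)$ (breaking ties arbitrarily), then update $S \leftarrow S_b$.

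The first step is to verify the key ``drop-by-one'' inequality
\[D(f, S) \leq 1 + \max_{b \in \{0,1\}} D(f, S_b).\]
This holds because, given optimal decision trees $T_0, T_1$ computing $f$ on $S_0$ and $S_1$ respectively, one can glue them together by branching on $x_i$ at the root, yielding a tree of depth $1 + \max(D(T_0), D(T_1))$ that is correct on all of $S$. Consequently the adversary's chosen answer always satisfies $D(f, S_b) \geq D(f, S) - 1$, so the potential $D(f, S)$ decreases by at most $1$ per query.

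The second step is to conclude. After the algorithm makes $t - 1$ queries, the surviving set $S$ satisfies $D(f, S) \geq D(f) - (t - 1) \geq 1$. A depth-$0$ tree is just a leaf labeled by a constant, so $D(f, S) \geq 1$ forces $f$ to take at least two distinct values on $S$. Since every $x \in S$ is consistent with the queries made and the adversary's answers, the partial assignment fixed so far is compatible with inputs yielding different values of $f$, hence does not uniquely determine $f(x)$.

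The argument is genuinely routine; there is no real obstacle. The only point requiring care is the setup of the potential $D(f, S)$ (a depth measure on a restricted domain) and the verification of the drop-by-one inequality, which however reduces immediately to the trivial construction of a decision tree from sub-trees by branching on a single bit at the root.
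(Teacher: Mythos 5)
Your proof is correct. The paper omits the argument entirely (describing it only as a ``simple inductive proof by contradiction''), so there is nothing to compare against, but your potential-function approach---tracking $D(f,S)$ on the surviving set and establishing the drop-by-one inequality $D(f,S)\leq 1+\max_b D(f,S_b)$ by gluing subtrees at a root query---is the canonical way to establish optimality of the adversary method, and it is exactly the induction the paper alludes to (just presented directly rather than by contradiction). One small point worth making explicit, though it does not affect correctness: the adversary's answers, and hence the set $S$, are determined solely by the sequence of queries made so far (since the answers themselves are deterministic functions of that sequence), which is what the claim requires.
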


The proof of Claim \ref{advopt} is a simple inductive proof by contradiction, and is omitted.  Note that Claim \ref{advopt} applies in particular when $f(x) = W(x)$ is the (total, single-valued) function associated with a TUSP $W$.

\subsection{Set Systems and Hitting Sets} \label{setsys}

As a final preliminary definition, we introduce hitting sets of set systems, which will play a key intermediate role in the proof of Theorem \ref{main}.

Given a finite `ground set' $U$, and a collection $\mathbf{A} = \{A_1, A_2, \ldots A_l\}$ of subsets of $U$, we say a set $B \subseteq U$ \textit{hits} $\mathbf{A}$, or is a \textit{hitting set for} $\mathbf{A}$, if $B \cap A_i \neq \emptyset $ for all $i \leq l$.

Given a positive function $w: U \rightarrow \mathbb{N}$ called a `weight function', define the weight of a set $A \subseteq U$ as $w(A) = \Sigma_{u \in A}w(u)$.  Define the \textit{weighted hitting set cost} of the collection $\mathbf{A}$ (relative to $w$) as $\rho(\mathbf{A}) =$ min $\{c:$ there exists a hitting set $B \subseteq U$ for $\mathbf{A}$ with $w(B) \leq c\}$.

Given a collection $\mathbf{A} = \{A_1, \ldots A_l\}$, and given $X \in \{0, 1\}^l$, define $S_X = \{A_i: X_i = 1\}$.  Define the \textit{weighted hitting set cost function} $C_{\mathbf{A}}(X): \{0, 1\}^l \rightarrow \mathbb{N}$ by $C_{\mathbf{A}}(X) = \rho(S_X)$.

We now prove that the class of weighted hitting set cost functions is \textit{exactly} the class of economic cost functions.

\begin{lemma}\label{hsuniv} For any set system $\mathbf{A}$ and weight function $w$, $C_{\mathbf{A}}$ is an economic cost function.  Moreover, given any economic cost function $C(X): \{0, 1\}^l \rightarrow \mathbb{N}$, there exists a finite set $U$ and a collection $\mathbf{A} = \{A_1, \ldots A_l\}$ of subsets of $U$, such that for all $X \in \{0, 1\}^l$, $C_{\mathbf{A}}(X) = C(X)$.\end{lemma}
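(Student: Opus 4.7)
The plan is to dispatch the two directions separately; both are short once the right construction is identified.

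For the forward direction, I would verify each of the three axioms directly. Strict positivity off $\mathbf{0}$ uses that each $A_i$ is nonempty together with positivity of $w$ (so any hitting set for a nonempty subfamily has weight $\geq 1$), while $C_{\mathbf{A}}(\mathbf{0}) = 0$ is witnessed by the empty hitting set $B = \emptyset$. Monotonicity is immediate from $X \leq Y \Rightarrow S_X \subseteq S_Y$, so any hitting set for $S_Y$ automatically hits $S_X$. Subadditivity follows from $S_{X \vee Y} = S_X \cup S_Y$: the union of optimal hitting sets for $S_X$ and $S_Y$ hits $S_{X \vee Y}$ with total weight at most $C_{\mathbf{A}}(X) + C_{\mathbf{A}}(Y)$.

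For the converse, I propose the following canonical construction. Index the ground set by all nonzero bitvectors, letting
\[U = \{u_Y : Y \in \{0,1\}^l \setminus \{\mathbf{0}\}\},\]
put weight $w(u_Y) = C(Y)$ on each element (a positive integer by axiom (1) together with integrality), and define $A_i = \{u_Y : Y_i = 1\}$. Intuitively, $u_Y$ is a `bundle-token' that resolves exactly the goods indexed by $Y$ at its true price $C(Y)$. For $X \neq \mathbf{0}$, note that $B = \{u_{Y_1}, \ldots, u_{Y_k}\}$ hits $S_X$ if and only if for every $i$ with $X_i = 1$ some $Y_j$ has $(Y_j)_i = 1$, i.e.\ if and only if $Y_1 \vee \cdots \vee Y_k \geq X$.

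To close the argument I would prove $C_{\mathbf{A}}(X) = C(X)$ by two inequalities. The upper bound $C_{\mathbf{A}}(X) \leq C(X)$ is immediate from the singleton token $B = \{u_X\}$, which hits $S_X$ at cost $C(X)$. The lower bound is the one place where both nontrivial axioms of an economic cost function get used at once: given any hitting set $\{u_{Y_1}, \ldots, u_{Y_k}\}$ for $S_X$, monotonicity gives $C(Y_1 \vee \cdots \vee Y_k) \geq C(X)$ and iterated subadditivity gives $\sum_j C(Y_j) \geq C(Y_1 \vee \cdots \vee Y_k)$, so $\sum_j w(u_{Y_j}) \geq C(X)$. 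I do not foresee any genuine obstacle; the only fine points are handling $X = \mathbf{0}$ (taken care of by the empty hitting set matching $C(\mathbf{0}) = 0$) and verifying that $w$ is valued in the positive integers, which is exactly what axiom (1) delivers for an integer-valued $C$.
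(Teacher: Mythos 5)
Your proof is correct and follows essentially the same construction as the paper: ground set indexed by bitvectors with $w(u_Y) = C(Y)$, the sets $A_i = \{u_Y : Y_i = 1\}$, the singleton $\{u_X\}$ for the upper bound, and monotonicity plus iterated subadditivity for the lower bound. The one small refinement is that you exclude $\mathbf{0}$ from the index set, which avoids a zero-weight element and thus keeps $w$ genuinely positive as the paper's definition of a weight function requires (the paper's version includes $b_{\mathbf{0}}$ with $w(b_{\mathbf{0}}) = C(\mathbf{0}) = 0$, a harmless but technically imprecise inclusion).
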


\begin{proof}  First we show that $C_{\mathbf{A}}$ is always an economic cost function. That condition (1) of the definition of economic cost functions is satisfied is immediate.  For condition (2), note that if $X \leq Y$, $S_X \subseteq S_Y$, so any hitting set for $S_Y$ is also one for $S_X$.  Thus $C_{\mathbf{A}}(X) = \rho(S_X) \leq \rho(S_Y) = C_{\mathbf{A}}(Y)$, as needed.

To see that condition (3) is satisfied, note that if $B_X, B_Y$ are hitting sets for $S_X, S_Y$, then $B_X \cup B_Y$ is a hitting set for $S_X \cup S_Y = S_{X \vee Y}$, and $w(B_X \cup B_Y) \leq w(B_X) + w(B_Y)$. 

For the second part, let $C(X): \{0, 1\}^l \rightarrow \mathbb{N}$ be an economic cost function.  We define a set system and weight function as follows.  Let $U$ be a set of size $2^l$, indexed by $l$-bit vectors as $U = \{b_X: X \in \{0, 1\}^l\}$.  

Let $\mathbf{A} = \{A_1, \ldots A_l\}$, where $A_i = \{b_X: X_i = 1\}$. Finally, define $w(b_X) = C(X)$.

We claim that, for all $X = (X_1, \ldots X_l)$, $C_{\mathbf{A}}(X) = C(X)$.  First we argue that $C_{\mathbf{A}}(X) \leq C(X)$.  Consider the singleton set $B = \{b_X\}$.  For every $i$ such that $X_i = 1$, $b_X \in A_i$.  Thus, $B$ is a hitting set for $S_X = \{A_i: X_i = 1\}$.  By definition, then, $C_{\mathbf{A}}(X) \leq w(B) = w(b_X) = C(X)$.

Now examine any hitting set $B'$ for $\{A_i: X_i = 1\}$, say $B' = \{b_{Z_{[j]}}: Z_{[j]} \in I \subseteq \{0, 1\}^l\}$.  For each $i$ such that $X_i = 1$, $A_i$ is hit by $B'$, so there exists some $Z_{[j]} \in B'$ such that $b_{Z_{[j]}} \in A_i$.  Then by definition of $A_i$, $Z_{[j] (i)} = 1$.  Thus $X \leq \bigvee_{Z_{[j]} \in I} Z_{[j]}$, and 
\[w(B') = \sum_{Z_{[j]} \in I} w(b_{Z_{[j]}}) = \sum_{Z_{[j]} \in I} C(Z_{[j]}) \geq C(\bigvee_{Z_{[j]} \in I} Z_{[j]})\]
(the last inequality holds by iterated application of property (3) of economic cost functions)
 $\geq C(X)$ (since $X \leq \bigvee_{Z_{[j]} \in I} Z_{[j]}$, and using property (2) of economic cost functions).  Thus $C_{\mathbf{A}}(X) = C(X)$, as claimed.  \end{proof}

%END OF PRELIMS

\section{Proof of Theorem \ref{main}}

\subsection{First Steps}

The first half of Theorem \ref{main} is easy, and recorded in Lemma \ref{easydir}:
\begin{lemma}\label{easydir} If $F = \{f_1(x), f_2(x), \ldots f_l(x)\}$ is a collection of nonconstant functions, $C_F(X)$ is an economic cost function.  \end{lemma}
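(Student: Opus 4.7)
The plan is to verify the three defining conditions of an economic cost function directly from the definition of $C_F$, building small modifications of optimal decision trees where needed. None of the three conditions should require nontrivial construction; the interesting content of Theorem~\ref{main} lies entirely in the converse direction. I expect no real obstacle here.

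First I would handle condition (1), nonnegativity with $C_F(X)=0\iff X=\mathbf{0}$. Nonnegativity is immediate because $C_F(X)$ is defined as a decision-tree depth. The equality $C_F(\mathbf{0})=0$ holds by convention. For the other direction, if $X\neq \mathbf{0}$, fix any $i$ with $X_i=1$; any decision tree computing $S_X$ in particular produces the value of the nonconstant function $f_i$, so it cannot have depth $0$ (a depth-$0$ tree outputs a constant). Hence $C_F(X)\geq 1$.

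For condition (2), monotonicity, suppose $X\leq Y$, so $S_X\subseteq S_Y$. Take an optimal depth-$C_F(Y)$ decision tree $T_Y$ computing $S_Y$. Relabel each leaf of $T_Y$ by deleting from its output tuple the coordinates corresponding to indices $i$ with $X_i=0$. The resulting tree computes $S_X$ and has the same depth, giving $C_F(X)\leq C_F(Y)$.

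For condition (3), subadditivity, take optimal trees $T_X,T_Y$ of depths $C_F(X),C_F(Y)$ computing $S_X,S_Y$ respectively. Form a composite tree by attaching a copy of $T_Y$ at each leaf of $T_X$, and labeling each leaf of the composite tree by concatenating the output of $T_X$ (at the corresponding leaf) with the output of $T_Y$ (at the reached leaf). This composite tree computes every $f_i$ in $S_X\cup S_Y=S_{X\vee Y}$ and has depth at most $D(T_X)+D(T_Y)=C_F(X)+C_F(Y)$, proving $C_F(X\vee Y)\leq C_F(X)+C_F(Y)$. Together the three conditions show that $C_F$ is an economic cost function, completing the lemma.
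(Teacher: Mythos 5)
Your proof is correct and takes essentially the same approach as the paper: verifying the three conditions directly by constructing modified decision trees (leaf relabeling for monotonicity, tree composition for subadditivity). No substantive difference.
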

\begin{proof} Clearly $F$ satisfies condition (1) in the definition of economic cost functions, since $C_F(\mathbf{0}) = 0$ and all decision trees computing a nonconstant function or functions has depth at least 1.

$C_F(X)$ satisfies condition (2) since, given an optimal decision tree $T$ for computing a collection $S = S_X$ of functions from $S$, and given a subset $S' = S_{X'} \subseteq S$, we can modify $T$ by removing the coordinates of its output vectors corresponding to the functions in $S \setminus S'$, yielding a decision tree $T'$ of the same depth computing the collection $S'$.  So $D(S_{X'}) \leq D(S_X)$ and $C_F(X') \leq C_F(X)$.

To show that $C_F(X)$ satisfies condition (3), let $T_X, T_Y$ be optimal decision trees of depths $d_1, d_2$ respectively, for computing the collections $X, Y$ respectively.  We define a decision tree $T'$ as follows: we replace each output node $u$ of $T_X$ with a copy $T_{Y, u}$ of $T_Y$, and on an output node $v$ of the copy $T_{Y, u}$ we place the label $(z(u), z(v))$, where $z(u)$ is the label of $u$ in $T_X$ and  $z(v)$ is the label of $v$ in $T_Y$.  Then $T'$ computes the collection $S_X \cup S_Y$ (possibly with redundant coordinates that we can remove, and up to a reordering of the outputs).  The depth of the new tree is $d_1 + d_2$. This yields condition (3).
\end{proof}

Now we turn to the second, harder half of Theorem \ref{main}.  Following Lemma \ref{hsuniv} showing the `universality' of hitting set cost functions, our approach to proving Theorem \ref{main} is to build a collection of functions mimicking the structure of a given set system $\mathbf{A}$, where each $f_i$ we create will correspond to some $A_i \in \mathbf{A}$.  We will prove:

\begin{lemma} \label{hsfuncs}  Given any hitting set cost function $C_{\mathbf{A}}(X): \{0, 1\}^l \rightarrow \mathbb{N}$ and $\epsilon > 0$, there exist integers $n$, $T$, and a collection $F = \{f_1(x), \ldots f_l(x)\}$ of functions on $n$ bits, such that, for all $X \in \{0, 1\}^l$, 
\[C_{\mathbf{A}}(X)\cdot T(1 - \epsilon) \leq C_F(X) \leq C_{\mathbf{A}}(X) \cdot T(1 + \epsilon).\]  \end{lemma}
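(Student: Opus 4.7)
The plan is to implement the high-level construction sketched in the introduction. For each element $u \in U$ I introduce a disjoint block $y_u$ of input variables, the \emph{mystery bin} for $u$, and let $x$ be the concatenation of all the $y_u$. A single common keyspace $K$ is fixed, and each bin $y_u$ either contains a unique key $k \in K$ or is empty, in a manner governed by a \emph{mystery bin function} to be constructed in Section \ref{mbinsec}. I then define $f_i(x) = 1$ iff some key $k \in K$ simultaneously occupies every bin $y_u$ with $u \in A_i$. Modulo the construction of the bins themselves, the two sides of the approximation will fall out of matching algorithmic and adversarial arguments. I will choose parameters so that property (i) yields a per-bin determination cost of at most $w(u) T$; property (ii) yields a per-key, per-bin consistency check of cost at most some $s$ with $s/T$ arbitrarily small; and property (iii) forces any algorithm that spends noticeably fewer than $w(u)T$ queries on bin $y_u$ to remain consistent with essentially every $k \in K$ occupying $y_u$.

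For the upper bound, given $X$ I pick a minimum-weight hitting set $B$ for $S_X = \{A_i : X_i = 1\}$, so $w(B) = C_{\mathbf{A}}(X)$. Phase 1 fully determines each bin $y_u$ for $u \in B$, costing at most $\sum_{u\in B} w(u)T = T \cdot C_{\mathbf{A}}(X)$ queries by property (i). In Phase 2, for each $A_i \in S_X$ I pick some $u \in A_i \cap B$; if $y_u$ is empty I output $f_i(x) = 0$, otherwise I run the cheap check of property (ii) to test the key of $y_u$ against every other bin in $A_i$. The Phase 2 cost is at most $\sum_i |A_i| \cdot s \leq l^2 s$, which becomes at most $\epsilon T \cdot C_{\mathbf{A}}(X)$ after taking the scale parameter of the mystery bins large enough (depending on $\mathbf{A}$, $w$, and $\epsilon$).

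For the lower bound, I run any query algorithm making fewer than $(1-\epsilon)T \cdot C_{\mathbf{A}}(X)$ queries against the strategy that runs the adversaries of property (iii) independently on each bin, applying Claim \ref{advopt} within each bin. Let $q_u$ count queries to $y_u$ and let $B^* = \{u : q_u \geq (1-\delta)w(u)T\}$ for an appropriate $\delta$; a simple counting argument shows $w(B^*) < C_{\mathbf{A}}(X)$, so $B^*$ cannot hit $S_X$ and some $A_i \in S_X$ has $A_i \cap B^* = \emptyset$. Then every bin $y_u$ with $u \in A_i$ was under-queried, so by property (iii) the transcript within each such bin is consistent with almost any $k \in K$ occupying it (and with the bin being empty). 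Hence the algorithm cannot rule out a common key hiding in all the bins of $A_i$, nor can it certify one, and therefore it fails to compute $f_i(x)$ — contradicting the assumption that it computes the full collection $\{f_i : X_i = 1\}$.

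I expect the main obstacle to lie not in this reduction but in the construction of mystery bin functions obeying (i)–(iii) simultaneously, which is the work of Section \ref{mbinsec}. The delicate tension is that one wants a single common keyspace $K$ large enough that undersampling a bin keeps essentially every $k \in K$ plausible, while still allowing a fixed $k$ to be tested against a bin in only $o(T)$ queries. Savick\'{y}'s TUSP from Theorem \ref{sav2} provides the required polynomial gap between $D(W_N)$ and $s(W_N)$, which is what separates properties (i) and (iii) from property (ii); scaling to weight $w(u)$ will be handled by combining $w(u)$ independent TUSP-instances inside bin $y_u$; and a weak exposure-resilient encoding will be layered on top to ensure that partial information about a bin does not localize the identity of the key in $K$. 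Getting the quantitative trade-offs among $|K|$, $s$, $T$, and the adversary's residual ambiguity to line up is the real technical content.
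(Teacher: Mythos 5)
Your proof is correct and essentially identical to the paper's: you use the same bin/keyspace construction, the same two-phase upper bound via a minimum-weight hitting set and cheap per-key consistency checks, and the same lower bound via parallel per-bin adversaries, a weight-counting argument showing the heavily-queried bins cannot hit $S_X$, and a union bound to keep a common key plausible across an under-queried $A_i$. The one divergence is your parenthetical guess that scaling a bin to weight $w(u)$ is achieved by combining $w(u)$ independent TUSP instances; the paper instead replaces each input bit of a base mystery bin by the parity of a $c$-bit block, which transfers the adversary strategy and the unambiguity cleanly with a factor-$c$ blowup --- but that is internal to Lemma~\ref{mystbin} and does not affect the argument for Lemma~\ref{hsfuncs}.
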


In light of Lemmas \ref{hsuniv} and \ref{easydir}, this will prove Theorem \ref{main}.

\subsection{Bins and Mystery Bins} \label{mbinsec}

%Definition:

Central to our construction of the function family of Lemma \ref{hsfuncs} is a technical device called a `bin'.  A \textit{bin function} is a function $\mathbf{B}(y)$ mapping a Boolean input $y$ (of some fixed length) to subsets of size 0 or 1 of a set $K = \{k_1, \ldots k_M\}$ called the `keyspace'.  We call the input $y$ a `bin', and say that $k$ is `in the bin $y$' if $\mathbf{B}(y) =\{k\}$.

Our input $x$ to the function collection of Lemma \ref{hsfuncs} is going consist of disjoint bins, one bin corresponding to each $u \in U$ from our set system $\mathbf{A}$.  The bins will have different parameters; loosely speaking, we want the difficulty of determining the bin contents $\mathbf{B}_u (y_u)$ of the bin $y_u$ corresponding to $u \in U$ to be proportional to $w(u)$.  This property by itself would be relatively easy to guarantee, but we need our bins to have some other special properties as well, formalized next in the definition of `mystery bins'.

Given $\beta \in [0, 1]$ and an integer $q \geq 1$, say that $\mathbf{B}$ has \textit{security $\beta$ for $q$ queries}, and write $\secur(\mathbf{B}, q) \geq \beta$, if there exists an adversary strategy for answering queries to the vector $y$ such that, for any query strategy making $q$ queries to $y$, there exists a set $H \subset K$ of size $\beta  |K|$, such that for any key $k \in H$, the bits of $y$ fixed in the process are consistent with the condition $\mathbf{B}(y) = \{k\}$. (We do not require that the bits seen be consistent with the condition $\mathbf{B}(y) = \emptyset $, although the adversaries we will define in our construction do achieve this.)

Note that in this definition, we require an adversary strategy for deciding the input bits as they are queried, with answers depending only on the questions and answers so far, \textit{not} on the strategy/program making the queries.

Fix $T > 0$, $\delta \in (0, 1)$.  A bin function $\mathbf{B}(y)$ is called a $(T, \delta)$-\textit{mystery bin function (MBF)} (with keyspace $K$), if 

\begin{itemize}
\item[(i)] There is a $T$-query algorithm to compute $\mathbf{B}(y)$;
\item[(ii)] For any $k \in K$, it can be decided in $\delta  T$ queries whether $k \in \mathbf{B}(y)$;
\item[(iii)] $\secur(\mathbf{B}, (1 - \delta ) T) \geq (1 -\delta )$.
\end{itemize}
(Note the correspondence, when $\delta$ is close to 0, between these conditions and their informal versions in the proof sketch from the Introduction.)

Constructing mystery bin functions seems to crucially rely on a result like Theorem \ref{sav2} and its associated TUSP.  Note that mystery bin functions behave quite similarly to the TUSPs from Theorem \ref{sav2}: given a particular potential witness $w \in W$, it is easily checked if the input $x$ agrees with $w$; but computing $W(x)$ may be much harder.  The main additional ingredient in mystery bin functions is the property (iii) above, which imposes on algorithms a `sharp transition' between near-total ignorance and certainty as they attempt to determine a bin's contents.  This sharp transition is what will allow us to tightly analyze the function collections we will build to prove Lemma \ref{hsfuncs}.

Our construction of mystery bin functions is given by the following Lemma:

\begin{lemma}\label{mystbin} For all $\delta > 0$, we can find $T, M > 0$ such that, for every integer $c \geq 1$, there exists a $(c T, \delta)$-mystery bin function with keyspace $K = [M]$.

\end{lemma}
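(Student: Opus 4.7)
The plan is to combine Savick\'{y}'s TUSP $W = W_N$ from Theorem~\ref{sav2} with an auxiliary \emph{weak exposure-resilient function} $E$ (a random labeling whose existence with appropriate parameters I would establish separately, probabilistically). The TUSP supplies the polynomial gap between decision-tree depth $D(W)$ and unambiguous-witness size $s(W)$ that underlies properties (i) and (ii); the ERF is what upgrades Claim~\ref{advopt}'s weak ``at least two plausible witnesses'' guarantee into the ``$(1-\delta)$-fraction of the keyspace'' guarantee demanded by property (iii).

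Concretely, I would fix $M$, choose $N$ so that $s(W) \le (\delta/3)\,D(W)$, and set $T = D(W)$. The $(cT,\delta)$-bin takes input $y = (\xi_1,\dots,\xi_c,z)$, where each $\xi_i$ is an independent TUSP input and $z$ is a short auxiliary block. I would fix once and for all a random labeling $k \mapsto (w_1^{(k)},\dots,w_c^{(k)},z^{(k)})$ assigning each $k \in [M]$ a canonical witness-tuple together with a canonical $z$-value, and declare $\mathbf{B}(y) = \{k\}$ iff $\xi_i$ agrees with $w_i^{(k)}$ for every $i$ and $z = z^{(k)}$, and $\mathbf{B}(y) = \emptyset$ otherwise. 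TUSP uniqueness rules out more than one key being certified at once, so $\mathbf{B}$ is well-defined as a $0$- or $1$-element subset of $[M]$---this is precisely the role noted in the introduction for the unambiguity property.

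Properties (i) and (ii) are then straightforward. To compute $\mathbf{B}(y)$, run the optimal decision tree for $W$ on each $\xi_i$ (cost $D(W)$ per copy), read $z$, and look up which key (if any) is certified; this fits within $cT$ queries. To check a specific key $k$, verify the agreement $\xi_i \sim w_i^{(k)}$ one TUSP at a time, each costing at most $s(W) \le (\delta/3)D(W)$ queries, and compare $z$ against $z^{(k)}$; the total is $c\,s(W) + O(|z|) \le \delta cT$ after an appropriate choice of $|z|$.

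The heart of the proof, and the main obstacle, is property (iii). The adversary I would use runs Claim~\ref{advopt}'s TUSP adversary on each $\xi_i$ independently, and answers queries to $z$ so as to keep as many of the $z^{(k)}$'s live as possible. After $(1-\delta)cT$ queries, a pigeonhole argument shows that at least $\Omega(\delta c)$ of the TUSP copies receive fewer than $D(W)$ queries, so each such copy admits $\ge 2$ plausible witnesses by Claim~\ref{advopt}; the set of plausible witness-tuples therefore has size at least $2^{\Omega(\delta c)}$. The critical step is then to use the weak ERF property of the random labeling to conclude that at least $(1-\delta)M$ of the $M$ keys correspond to plausible tuples---essentially a coupon-collector-style estimate for a random labeling applied to a large, combinatorially structured plausible set, and this is exactly the place where the weak ERF does its real work. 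This bound needs $\delta c$ to exceed a threshold depending on $M$ and $\delta$; for the boundary cases of small $c$ I would simply inflate the base $T$ (equivalently, absorb a constant number of parallel copies into a new base bin) so that the counting argument is already satisfied at $c = 1$.
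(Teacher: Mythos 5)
There is a genuine gap in property (iii), and it traces to making the labeling $k \mapsto (w_1^{(k)},\ldots,w_c^{(k)},z^{(k)})$ a \emph{fixed, publicly known} object rather than part of the input $y$. Because the querying algorithm knows the labeling in full, a cheap attack is to spend $D(W)$ queries fully determining $W(\xi_1) = w_1^*$; after that, the only keys still consistent with a nonempty bin are those $k$ with $w_1^{(k)} = w_1^*$. For a random labeling into $W^c \times \{0,1\}^{|z|}$ --- which you need to be essentially injective so that $\mathbf{B}$ is single-valued --- this is on the order of $M/|W|$ keys, nowhere near $(1-\delta)M$. Your pigeonhole step correctly shows that $\Omega(\delta c)$ copies remain ambiguous and that the set of plausible tuples has size at least $2^{\Omega(\delta c)}$, but that set is an exponentially small \emph{fraction} of $W^c$, and a fixed random labeling of $[M]$ into $W^c$ puts essentially no keys there once even one copy is determined. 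The wERF notion cannot rescue this: the paper's wERF $J : \{0,1\}^m \to [d]$ is a guarantee about an $m$-bit string that lives \emph{inside the input} and is revealed only query-by-query, so that after few queries the value $J(\cdot)$ is still undetermined. There is no analogous statement for a labeling the algorithm already sees in its entirety. The $z$-block compounds the problem: with $z^{(k)}$ public, roughly $\log M$ queries to $z$ already collapse the live key set.

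The paper resolves exactly this tension by putting the key--witness correspondence \emph{into the input}: two tables $\mathsf{WtK}$ and $\mathsf{KtW}$, where $\mathsf{WtK}$ entries are encoded through a wERF so that an entry queried fewer than $t$ times could still map its witness to \emph{any} key in $[M]$. The adversary answers all table queries with zeros, and the security proof shows that an algorithm shrinking the candidate key set below $(1-\delta)|K|$ would have to determine $W(x)$ with too few queries, contradicting Claim~\ref{advopt}. The paper also avoids your $c$-parallel-copies construction altogether: the base bin is built from a \emph{single} TUSP instance (plus the two tables), and general $c$ is obtained by a clean padding reduction in which each input bit of the base bin is replaced by the parity of a fresh block of $c$ bits, scaling all three MBF properties by exactly $c$. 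That reduction sidesteps the injectivity-versus-robustness conflict your parallel-copy labeling cannot escape.
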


\subsection{Application of Mystery Bins}

Before proving Lemma \ref{mystbin}, we show how it is used to prove Lemma \ref{hsfuncs} and, hence, Theorem \ref{main}.

Say we are given a collection $\mathbf{A} = \{A_1, A_2, \ldots A_l\}$ of subsets of a universe $U$, and a weight function $w: U \rightarrow \mathbb{N}$.  We wish to produce a collection $F = (f_1, \ldots f_l)$ of functions such that the cost of computing a subset of the functions of $F$ is approximately a fixed scalar multiple of the minimum cost under $w$ of a hitting set for the corresponding sets in $\mathbf{A}$.

Let $w_{max}$ be the largest value of $w(u)$ over $U$.  For each $u \in U$, we define a block of input $y_u$ corresponding to $u$ and a bin function $\mathbf{B}_u$ taking $y_u$ as input.  $\mathbf{B}_u$ is chosen as a $(w(u) T, \frac{\epsilon}{  w_{max} l |U|})$-MBF with keyspace $K = [M]$, for some $T, M > 0$ independent of $u$, as guaranteed by Lemma \ref{mystbin}.  Let the input $x$ to $F$ be defined as the disjoint union of all the $y_u$.

For $i \leq l$, define $f_i(x)$ by 
\[f_i(x) = 1 \Leftrightarrow \exists k \in [M] \text{ such that }  \mathbf{B}_u (y_u) = \{k\}, \forall u \in A_i . \]

We claim that $F$ satisfies the conclusions of Lemma \ref{hsfuncs}.  If $X = \mathbf{0}$ the statement is trivial, so assume $X \neq \mathbf{0}$.  First we show the upper bound on $C_F(X)$.  Given the corresponding nonempty subset $S_X \subseteq \mathbf{A}$, let $B \subseteq U$ be a hitting set for $S_X$ of minimal cost: 

\[w(B) = \rho (S_X) = C_\mathbf{A}(X).\] 

Define an algorithm $P_X$ to compute $\{f_i (x): X_i = 1\}$ as follows: 

\begin{itemize}
\item []\textbf{Phase 1:} For each $u \in B$, compute bin contents $\mathbf{B}_u(y_u)$.
\item []\textbf{Phase 2:} For every $i$ such that $X_i = 1$, pick some $u \in B \cap A_i$ (such a $u$ must exist, since $B$ is a hitting set for $S_X$).  If in Phase 1 it was found that $\mathbf{B}_u(y_u) =\emptyset$, clearly $f_i(x) = 0$.  Otherwise, suppose $\mathbf{B}_u(y_u) = \{k\}$ for some $k \in M$; in this case, query each mystery bin $\mathbf{B}_{u'}(y_{u'})$ such that $u' \in A_i$, to ask whether $k \in \mathbf{B}_{u'}(y_{u'})$.  By the definitions, $f_i(x) = 1$ iff $k$ is indeed the contents of all such bins, so the queries of $P_X$ determine $f_i (x)$ and the output nodes of $P_X$ can be labeled to compute $f_i(x)$, for every $i$ with $X_i = 1$.
\end{itemize}

Now we bound the number of queries made by $P_X$.  In Phase 1, each individual bin contents $\mathbf{B}_u(y_u)$ can be computed in $w(u) T$ queries, by property (i) of MBFs and the definition of $\mathbf{B}_u(y_u)$.  Then altogether, at most $w(B)  T = C_\mathbf{A}(X)  T$ queries are made in this Phase.

In Phase 2, each question to a bin $y_{u'}$ asking if some $k$ is in $\mathbf{B}_{u'}(y_{u'})$ can be answered in at most 

\[  \frac{\epsilon}{  w_{max} l |U|}  (w(u')T)  \leq \frac{\epsilon T}{l |U|}   \]

queries, using property (ii) of MBFs.  Since at most $ l|U|$ such questions are asked (ranging over $(i, u')$), in total at most $\epsilon T$ such queries are made during Phase 2.  Summing over the two Phases shows that $C_F(X) \leq D(P_X) \leq (1 + \epsilon) T \cdot C_{\mathbf{A}}(X)$, as needed.

Now we show that $C_F(X) \geq (1 - \epsilon) T \cdot C_{\mathbf{A}}(X)$, again assuming $X \neq \mathbf{0}$.  We give an adversary strategy to determine the bits of $x$ as they are queried, namely:  For each $u \in U$, fix bits of $y_u$ as they're queried, by following the adversary strategy for $\mathbf{B}_u (y_u)$ given by property (iii) in the definition of MBFs, and answer queries to $y_u$ arbitrarily if this bin receives more queries than the adversary strategy for $\mathbf{B}_u (y_u)$ is guaranteed to handle.

Let $P$ be any algorithm making fewer than $(1 - \epsilon) T \cdot C_{\mathbf{A}}(X)$ queries to the input $x$; we will show that the queries made by $P$ against the adversary just defined fail to determine some value $f_i(x)$, for some $i$ such that $X_i = 1$.

For $u \in U$, let $q_u$ be the number of queries made by $P$ to $y_u$ against this adversary strategy.  Let  $B_P = \{u: q_u \geq (1 - \frac{\epsilon}{2})  w(u) T \}$ . 
We claim $B_P$ is not a hitting set for $S_X$.  To see this, note that
\[   w(B_P) = \sum_{u \in B_P} w(u)  \leq \sum_{u \in B_P} \frac{q_u}{(1 - \frac{\epsilon}{2})T}       \]
\[      \leq \frac{1} {(1 - \frac{\epsilon}{2})T}  \left(  \sum_{u \in U} q_u  \right)     \]
\[ <        \frac{1}{(1 - \frac{\epsilon}{2})T}       \left( (1 - \epsilon) T \cdot C_{\mathbf{A}}(X)   \right)    < C_{\mathbf{A}}(X), \]
so, by definition of $C_{\mathbf{A}}(X) = \rho (S_X)$, $B_P$ is not a hitting set for $S_X$.

Thus there exists an $i$ such that $X_i = 1$ and such that for every $u \in A_i$, $q_u < (1 - \frac{\epsilon}{2}) w(u) T$.  For each such $u$, by the guarantee of the adversary strategy used for bin $y_u$, there exist at least
\[(1 - \frac{\epsilon}{ w_{max}l |U|}  ) M > (1 - \frac{1}{|A_i|}) M \]
distinct keys $k \in [M]$, such that it is consistent with the bits of $y_u$ seen by $P$ that $\mathbf{B}_u(y_u) = \{k\}$.

By a union bound, there exists some fixed $k \in K$ such that it is consistent with the bits seen that $\mathbf{B}_u(y_u) = \{k\}$ for \textit{all} $u \in A_i$, which would cause $f_i(x) = 1$.  
On the other hand, it is also clearly consistent that \textit{not} all such bin contents $\mathbf{B}_u(y_u)$ are equal, and hence that $f_i(x) = 0$.  Thus $P$ fails to correctly compute $f_i(x)$, for at least one input $x$.  Since $X_i = 1$, we have shown that $C_F(X) \geq (1 - \epsilon) T \cdot C_{\mathbf{A}}(X)$.  This finishes the proof of Lemma \ref{hsfuncs}, assuming Lemma \ref{mystbin}. \qed

\subsection{Construction of Mystery Bins}

Now we prove Lemma \ref{mystbin}.  First, suppose we can prove Lemma \ref{mystbin} for $c = 1$; we'll show the conclusion then follows for every $c \in \mathbb{N}$, with the same values of $T$ and $M = |K|$.

Let $\mathbf{B}(y)$ be a $(T, \delta)$-MBF.  Say the input $y$ has length $m$; define a new bin function $\mathbf{B}_c(y')$ on input $\{0, 1\}^{cm}$ with the same keyspace $K$ by breaking the input $y'$ into $m$ blocks of size $c$, defining $z_i$ to be the sum mod 2 of the $i$th block ($i \leq m$), and setting $\mathbf{B}_c(y') :=  \mathbf{B}(z_1, \ldots z_m)$.

The adversary strategy $S'$ for $\mathbf{B}_c(y')$ is simply lifted from the strategy $S$ for $\mathbf{B}(y)$, by answering queries in any given block $i$ of $y'$ as 0s until the last, `critical' query to that $i$th block is made, then answering this query as the strategy $S$ would fix $y_i$ conditioned on the `critical' responses made so far.  Clearly any algorithm making $q$ queries can induce at most $\lfloor \frac{q}{c} \rfloor$ critical responses from the adversary, and so property (iii) in the definition of MBFs is easily seen to be inherited by $\mathbf{B}_c$.

Similarly, any algorithm for determining the bin contents $\mathbf{B}(y)$, or for querying whether $k \in \mathbf{B}(y)$ for some $k \in K$, can be adapted to $\mathbf{B}_c$ by simply querying entire blocks at a time.  This increases the number of queries by a factor $c$, giving properties (i) and (ii).  Thus $\mathbf{B}_c(y')$ is a $(c T, \delta)$-MBF with keyspace $[M]$, as needed.

\vspace{.5 em}

Now we prove Lemma \ref{mystbin} for the case $c = 1$.

Let $N > 0$ be a (large) integer to be determined later, and let $W = W_N$ be the TUSP guaranteed by Theorem \ref{sav2} for parameter $N$, with input size $m(N) = O($poly$(N))$.  For brevity write $D_N = D(W), s_N = s(W)$, and recall $D_N \geq s_N^{1 + \alpha}$, $s_N \geq N$.  We let $K := [D_N^2]$ be the keyspace.

We next describe the structure of the `bin' input $y$.  $y$ is broken into three disjoint parts, written as 
\[y = (x, \mathsf{WtK}, \mathsf{KtW}), \text{  where:}\]

\begin{itemize}
\item $x$ will be an input to the TUSP $W = W_N$ (so $|x| = m(N)$);
\item $\mathsf{WtK}$, called the `witness-to-key table', will be an encoding of a function $G_{\mathsf{WtK}}: W \rightarrow K$ (with a specific encoding method to be described shortly);
\item $\mathsf{KtW}$, called the `key-to-witness table', will be an encoding of a function $G_{\mathsf{KtW}}: K \rightarrow W$ (with a different encoding method, also described shortly).
\end{itemize}

In our definitions, every setting to the input tables $\mathsf{WtK}, \mathsf{KtW}$ will define functions $G_{\mathsf{WtK}}, G_{\mathsf{KtW}}$ as above, and such functions will generally not have unique encodings.

Assuming for now that the two encoding schemes have been fixed, we define the bin function $\mathbf{B}(y)$ as follows: $k \in \mathbf{B}(y)$ if the witness $w = W(x)$ satisfies 
\[G_{\mathsf{WtK}}(w) = k, \quad{} G_{\mathsf{KtW}}(k) = w.\]
Note that at most one key can be in the bin by this definition (or the bin may be empty).

Now we describe the encodings.  $\mathsf{KtW}$ simply uses any efficient encoding with a table entry $\mathsf{KtW}|_k$ corresponding to each element $k$ of the domain $K$.  Since each $w \in W \subset \{0, 1, *\}^{m(N)}$ has at most $s_N$ 0/1 entries, $|W|$ cannot be too large, namely

\[|W| \leq \sum_{i \leq s_N}\binom{m(N)}{i} = N^{O(s_N)},  \]

since $m(N) = O(\poly(N))$.  Thus each table entry $\mathsf{KtW}|_k$ in $\mathsf{KtW}$ can be represented using $O(s_N \log(N))$ bits.  We do so, assigning `leftover' codewords arbitrarily to elements of $W$, so that every table defines a function (and also every function is representable).

For the encoding $\mathsf{WtK}$, we want table entries to be `obfuscated', so that it takes many queries to learn anything about an individual value of $G_{\mathsf{WtK}}$.  We make the following definition, which resembles more-demanding definitions in \cite{CDHKS}:

Fix integers $m, d, t > 0$.  Say that a mapping $J: \{0, 1\}^m \rightarrow [d]$ is an $(m, d, t)$\textit{-weak Exposure-Resilient Function (wERF)} if for every $c \in [d]$ and every subset $S \subset [m]$ of size at most $t$, there is a $b \in \{0, 1\}^m$ with $J(b) = c$, and such that the entries of $b$ indexed by $S$ are all-zero.
\begin{claim}\label{werf} For sufficiently large $N > 0$, there exists a $(\lfloor s_N^{1 + \alpha / 2} \rfloor, D_N^2, \lfloor  \frac{1}{2} s_N^{1 + \alpha / 2} \rfloor)$-wERF $J$.

\end{claim}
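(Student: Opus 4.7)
The plan is a standard probabilistic method argument. Pick $J : \{0,1\}^m \to [d]$ by independently assigning to each string $b \in \{0,1\}^m$ a uniformly random label $J(b) \in [d]$, where $m = \lfloor s_N^{1+\alpha/2} \rfloor$, $d = D_N^2$, and $t = \lfloor \tfrac{1}{2} s_N^{1+\alpha/2} \rfloor$. I will show that with positive probability, $J$ is a wERF for the stated parameters, hence a valid $J$ exists.

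For each pair $(c, S)$ with $c \in [d]$ and $S \subseteq [m]$ of size at most $t$, call $(c, S)$ \emph{bad} if no string $b$ that is zero on the coordinates in $S$ has $J(b) = c$. The set of $b \in \{0,1\}^m$ vanishing on $S$ has size exactly $2^{m-|S|} \geq 2^{m-t}$, and these labels are independent, so
\[
\Pr[(c,S) \text{ is bad}] \leq \left(1 - \tfrac{1}{d}\right)^{2^{m-t}} \leq \exp\left(-\tfrac{2^{m-t}}{d}\right).
\]
The number of candidate pairs $(c,S)$ is at most $d \cdot \sum_{i=0}^{t} \binom{m}{i} \leq d \cdot 2^m$. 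By a union bound, the probability that $J$ fails to be a wERF is at most
\[
d \cdot 2^m \cdot \exp\left(-\tfrac{2^{m-t}}{d}\right).
\]

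It remains to check that this quantity is less than $1$ for all sufficiently large $N$. Taking logarithms, it suffices to verify $\log d + m \log 2 < 2^{m-t} / d$. Since $D_N \leq m(N) = \poly(N) = \poly(s_N)$ (from Theorem \ref{sav2}), we have $d = D_N^2 = \poly(s_N)$, so $\log d = O(\log s_N)$. Also $m = \Theta(s_N^{1+\alpha/2})$ and $m - t \geq \lfloor m/2 \rfloor = \Omega(s_N^{1+\alpha/2})$. The left-hand side is therefore $O(s_N^{1+\alpha/2})$, while the right-hand side is $2^{\Omega(s_N^{1+\alpha/2})} / \poly(s_N)$, which dominates for all sufficiently large $N$. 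Hence for large enough $N$, a random $J$ is a wERF with positive probability, so the desired $J$ exists.

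There is no substantive obstacle here: the number of bad events $(c,S)$ is at most $d \cdot 2^m$ (a single exponential in $m$), and the probability of each is a \emph{double} exponential $e^{-2^{\Omega(m)}/\poly}$, so the union bound is not tight — we have a great deal of slack. The only thing to be mindful of is that $d$ is polynomial rather than constant, but since $d = \poly(s_N)$ and $m - t = \Omega(s_N^{1+\alpha/2})$, the exponent $2^{m-t}/d$ still grows super-polynomially, easily swamping the $\log d + m$ term. The construction is non-explicit, which is acceptable since Lemma \ref{mystbin} only asserts existence of the bin function.
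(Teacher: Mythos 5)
Your proposal is correct and follows essentially the same probabilistic-method argument as the paper: choose $J$ uniformly at random, bound the probability that a fixed pair $(c,S)$ fails by $(1-1/d)^{2^{m-t}}$, and union-bound over the at most $d\cdot 2^m$ pairs, noting that the doubly-exponential decay of the failure probability easily swamps the singly-exponential number of events. Your verification that $2^{m-t}/d$ grows super-polynomially (using $d=D_N^2=\poly(s_N)$ and $m-t=\Omega(s_N^{1+\alpha/2})$) is slightly more carefully written than the paper's, which appears to contain a typo in the corresponding inequality (stating $m-t\geq d\,2^{m/3}$ where $2^{m-t}\geq d\,2^{m/3}$ is evidently intended), but the underlying reasoning is identical.
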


\begin{proof} Let $J$ be a uniformly chosen random function from the domain $\{0, 1\}^m$ (with $m = \lfloor s_N^{1 + \alpha / 2} \rfloor$) to the range $[d] = [D_N^2]$. We show that with nonzero probability $J$ satisfies the definition of an $(m, d, t)$-wERF with $t := \lfloor \frac{1}{2}  s_N^{1 + \alpha / 2} \rfloor$.

Fix any subset $S \subset [m]$ of size $t$, and a $c \in [d]$.  We analyze the probability $p_{S, c}$ that there is no $b \in J^{-1}(c)$ such that $b$ is all-zero when restricted to the coordinates in $S$.  This is simply $(1 - \frac{1}{d})^{2^{m - t}}$.  Now by our settings, for sufficiently large $N$ we have $m - t \geq d 2^{m/3}$.  Thus for such $N$,
\[p_{S, c} \leq \left(1 - \frac{1}{d}\right)^{d  2^{m/3}} \leq  e^{-2^{m/3}} .\]
Taking a union bound over all choices of $S, c$, the probability that $J$ fails to be an $(m, d, t)$-wERF is, for large enough $N$, less than $2^m d e^{-2^{m/3}}  = o(1)$.  So, with nonzero probability we succeed. \end{proof}

Recall that in our setting $K = [D_N^2]$.  We let each table entry $\mathsf{WtK}|_w$ of $\mathsf{WtK}$ (with position indexed by a witness $w \in W$) contain $\lfloor s_N^{1 + \alpha / 2} \rfloor$ bits, and define

\[G_{\mathsf{WtK}}(w) = J(\mathsf{WtK}|_w),   \]
where $J$ is as given by Claim \ref{werf}.  

This completes our description of the bin function $\mathbf{B}(y)$.  We now show that for a large enough choice of $N$ it is a $((1 + \frac{\delta}{2})D_N, \delta)$ mystery bin function.

First we verify property (i) in the definition of MBFs.  In order for a query algorithm to determine the bin contents $\mathbf{B}(y)$, it suffices to do the following: Inspect $x$ to determine $w = W(x)$; look up $G_{\mathsf{WtK}}(w)$, finding some key $k$; finally, check to see if $G_{\mathsf{KtW}}(k) = w$.  If so, $\mathbf{B}(y) = \{k\}$, otherwise the bin is empty.

The first step can be implemented in $D_N$ queries to $x$.  For the second step, table entries of $\mathsf{WtK}$ are of size $\lfloor s_N^{1 + \alpha / 2} \rfloor$, which is $o(D_N)$ since $D_N \geq s_N^{1 + \alpha}$.  The third step, querying a table entry of $\mathsf{KtW}$, takes $O(s_N\log(N))$ queries, which is also $o(D_N)$.  Thus the total number of queries is $D_N(1 + o(1))$, less than $(1 + \frac{\delta}{2})D_N$ for large enough $N$.  This shows property (i).

For property (ii) of MBFs, let $k \in K$ be any key; to determine if $\{k\} = \mathbf{B}(y)$, our algorithm queries $\mathsf{KtW}|_k$ to find $w = G_{\mathsf{KtW}}(k)$ and, subsequently, queries $\mathsf{WtK}|_w$ to determine if $k = G_{\mathsf{WtK}}(w)$.  If not, then $\{k\} \neq \mathbf{B}(y)$, and the algorithm reports this.  If $k = G_{\mathsf{WtK}}(w)$, then the algorithm makes at most $s_N = o(D_N)$ queries to $x$ to see if $x$ agrees with $w$.  Note that each step takes $o(D_N)$ queries, smaller than $\delta  (1 + \frac{\delta}{2})D_N$ for large $N$.  This gives property (ii).

Finally, we show property (iii).  This is the property for which we will use the fact that $|K|$ is large and entries of $\mathsf{WtK}$ are`exposure-resilient'.  Our adversary strategy against algorithms making at most $(1 - \delta)(1 + \frac{\delta}{2}) D_N < (1 - \frac{\delta}{2}) D_N$ queries to $y$ is as follows:

\begin{itemize}
\item Answer queries to $x$ according to a strategy, guaranteed to exist by Claim \ref{advopt}, that prevents any query strategy making fewer than $D_N$ queries to $x$ from uniquely determining the value $W(x)$.  Answer all queries to $\mathsf{WtK}, \mathsf{KtW}$ with zeros.

\end{itemize}

Our proof of correctness is by contradiction.  Suppose some deterministic algorithm $P$ makes at most $(1 - \frac{\delta}{2}) D_N$ queries to $y$ against this adversary, and afterwards outputs a list $L$ of fewer than $(1 - \delta) |K|$ keys, such that the bin contents $\mathbf{B}(y)$ is forced by the bits seen to either be empty or contain a key from $L$.

Define a new algorithm $P'$ as follows: in Phase 1 $P'$ first simulates $P$ on $y$, making all the queries $P$ does.  After $P$ terminates, define $V \subseteq W$ as the set of all witnesses $w$ for which $P$ has made more than $\lfloor \frac{1}{2} s_N^{1 + \alpha / 2}  \rfloor$ queries to the table entry $\mathsf{WtK}|_w$ in $\mathsf{WtK}$.  In Phase 2, for each $w \in V$ in turn, $P'$ makes any additional queries to $x$ necessary to determine whether $x$ agrees with $w$.

Say this latter set of queries in Phase 2 are `on behalf of $w$'.  Note that for every $w \in V$, at most $s_N$ queries are made on behalf of $w$ in Phase 2, while more than $\lfloor \frac{1}{2} s_N^{1 + \alpha / 2} \rfloor$ queries are made to the table entry $\mathsf{WtK}|_w$ in Phase 1.  It follows that only an $o(1)$ fraction of the queries of $P'$ are made in Phase 2, hence for large enough $N$, $P'$ makes fewer than $D_N$ queries to $y$.

But we claim that $P'$ succeeds in determining $W(x)$, contrary to the guarantee of our adversary strategy from Claim \ref{advopt}.  First, after the simulated operation of $P$ by $P'$, say that a witness $w$ is `live' if the bits of $x$ seen are consistent with the possibility $W(x) = w$.  Note that if there is a live witness $w$ whose table entry in $\mathsf{WtK}$ has been queried at most $\lfloor \frac{1}{2} s_N^{1 + \alpha / 2} \rfloor$ times, then the value $G_{\mathsf{WtK}}(w)$ is completely undetermined (any value is consistent with the bits seen), since the adversary answered those queries with zeros and the function $J$ used in defining $G_{\mathsf{WtK}}(w)$ is a $(\lfloor s_N^{1 + \alpha / 2} \rfloor, D_N^2, \lfloor \frac{1}{2} s_N^{1 + \alpha / 2} \rfloor)$-wERF.

Thus, for any key $k$ whose table entry in $\mathsf{KtW}$ was not queried by $P$, it is consistent with the bits of $y$ seen that $\mathbf{B}(y) = \{k\}$.  Since $|K| = D_N^2  = \omega(D_N)$, if $N$ is sufficiently large then $P$ cannot query a bit from even a $\delta$ fraction of $\mathsf{KtW}$'s table entries.  Hence, for $P$ to output the list of candidates $L \subset K$ with $|L| < (1 - \delta)|K|$, it must be that for every $w \in W$ still live after the operation of $P$, the $\mathsf{WtK}$ entry for $w$ must have been queried more than $\lfloor \frac{1}{2} s_N^{1 + \alpha / 2} \rfloor$ times, and thus $w \in V$.  

Since no two distinct $w \in W$ are compatible, it follows that exactly one $w$ remains live after Phase 2 of the operation of $P'$, so $P'$ determines the value $W(x)$.  Again, this is in contradiction to the guarantee of our adversary strategy from Claim \ref{advopt}, so the assumption about $P$ was false.  We have proved that $\mathbf{B}(y)$ satisfies property (iii) in the definition of MBFs, and altogether we have shown that $\mathbf{B}(y)$ is a $((1 + \frac{\delta}{2})D_N, \delta)$-MBF, proving Lemma \ref{mystbin} for $c = 1$ (with $T = (1 + \frac{\delta}{2})D_N, M = D_N^2$).  \qed

\section{Acknowledgments}

I would like to thank Scott Aaronson and Russell Impagliazzo for their support and encouragement, Michael Forbes for valuable proofreading assistance, Brendan Juba and Shubhangi Saraf for helpful discussions, and Suresh Venkatasubramanian for pointing me to the definition of lattice outer measures in \cite{Bir}.

\end{document}